\newtheorem{thm}{Theorem}[]
\newtheorem{cor}{Corollary}[]
\theoremstyle{remark}
\newtheorem{rem}[]{Remark}
\theoremstyle{definition}
\begin{document}

\title{Compress-Forward without Wyner-Ziv Binning for the One-Way and Two-Way Relay Channels}
\author{\authorblockN{Peng Zhong and Mai Vu\\}
\authorblockA{Department of Electrical and Computer Engineering\\
McGill University\\
Montreal, QC, Canada H3A 2A7\\
Emails: peng.zhong@mail.mcgill.ca, mai.h.vu@mcgill.ca}}

\maketitle

\begin{abstract}
We consider the role of Wyner-Ziv binning in compress-forward for relay
channels. In the one-way relay channel, we analyze a compress-forward scheme
without Wyner-Ziv binning but with joint decoding of both the message and compression index.
It achieves the same rate as the original compress-forward scheme with binning and successive decoding.
Therefore, binning helps reduce decoding complexity by allowing successive decoding, but
has no impact on achievable rate for the one-way relay channel. On the other hand, no binning simplifies relay operation. By extending compress-forward without binning to the two-way relay channel, we can achieve a larger rate region than the original compress-forward
scheme when the channel is asymmetric for the two users. Binning and successive decoding limits the compression rate to match the weaker of the channels from relay to two users, whereas without binning, this restriction no longer applies. Compared with noisy network coding, compress-forward
 without binning
achieves the same rate region in certain Gaussian channel configurations, and it has much less delay. This work is a step toward understanding the role of Wyner-Ziv binning in compress-forward relaying.
\end{abstract}

\IEEEpeerreviewmaketitle
\section{Introduction}\label{sec:intro}
\IEEEPARstart{T}{he} relay channel (RC) first introduced by van der Meulen \cite{meulen1971three} is a 3-node channel, in which a sender aims to communicate with a receiver with the help of a relay. Several coding schemes for the discrete-memoryless relay channel have been established. Compress-forward is a scheme proposed by Cover and El Gamal in \cite{cover1979capacity}, in which the relay compresses its noisy observation of the source signal and forwards the bin index of the compression to the receiver using Wyner-Ziv coding \cite{Wyner1976the}. Successive decoding is then performed at the receiver. At the end of each block, the receiver decodes the compression index first, then uses that to decode the message sent in the previous block. In \cite{gamal2006bounds}, El Gamal, Mohseni, and Zahedi put forward an equivalent form of the compress-forward lower bound. In \cite{rankov2006achievable}, Rankov and Wittneben apply the compress-forward scheme to the two-way relay channel (TWRC) in which two users wish to exchange messages with the help of a relay.\par

Recently, Lim, Kim, El Gamal and Chung put forward a noisy network coding scheme \cite{sung2011noisy} for the general multi-source noisy network. The scheme involves lossy compression by the nodes as in compress-forward for the relay channel. However, unlike compress-forward in which independent messages are sent over multiple blocks, here the same message is sent multiple times using independent codebooks. Furthermore, the nodes use no Wyner-Ziv binning, and perform simultaneous decoding of the received signals from all blocks without uniquely decoding the compression indices. Noisy network coding simplifies to the capacity-achieving network coding for noiseless multicast networks and achieves a larger rate region than the original compress-forward when applied to multisource networks such as the two-way relay channel. However, it also brings more delay in decoding.\par

Motivated by the original compress-forward and the new noisy network coding schemes, we aim to understand the role of binning by analyzing a compress-forward scheme in which the relay does not use Wyner-Ziv binning, and the receiver performs joint decoding of both the message and compression index based on signals received from both the current and previous blocks. In the one-way relay channel, compress-forward without binning achieves the same rate as the original compress-forward scheme and noisy network coding. Comparing with the original compress-forward, it simplifies relay operation since Wyner-Ziv binning is not needed, but increases decoding complexity since joint decoding instead of successive decoding is required. We then extend compress-forward without binning to the two-way relay channel. We show that it achieves strictly larger rate region than the original compress-forward scheme as in \cite{rankov2006achievable} when the channel is asymmetric for the two users. Although it
generally achieves smaller rate region than noisy network coding, it only has one block decoding delay. For the Gaussian TWRC, we also provide specific conditions for when compress-forward without binning achieves the same rate region as noisy network coding.\par
The remainder of this paper is organized as follows. We present the
channel models in Section \ref{sec:system_model}. The compress-forward
scheme without binning is applied to the one-way relay channel in Section \ref{sec:one-way}. In
Section \ref{sec:two-way}, we extend it to the two-way relay channel and present
numerical results. Finally, we conclude the paper
in Section \ref{Con}.


\section{Channel Models}\label{sec:system_model}
\subsection{Discrete Memoryless RC Model}
The discrete memoryless one-way relay channel (DM-RC) is
denoted by $(\mathcal{X} \times \mathcal{X}_r ,p(y,y_r|x,x_r),\mathcal{Y} \times
\mathcal{Y}_r)$, as in Figure \ref{fig:one-way}. Sender $X$ wishes to send a message $M$ to receiver $Y$ with the help of the relay $(X_r,Y_r)$. We consider a full-duplex
channel in which all nodes can transmit and receive at the same time.

\begin{figure}[t]
\centering
  \includegraphics[scale=0.7]{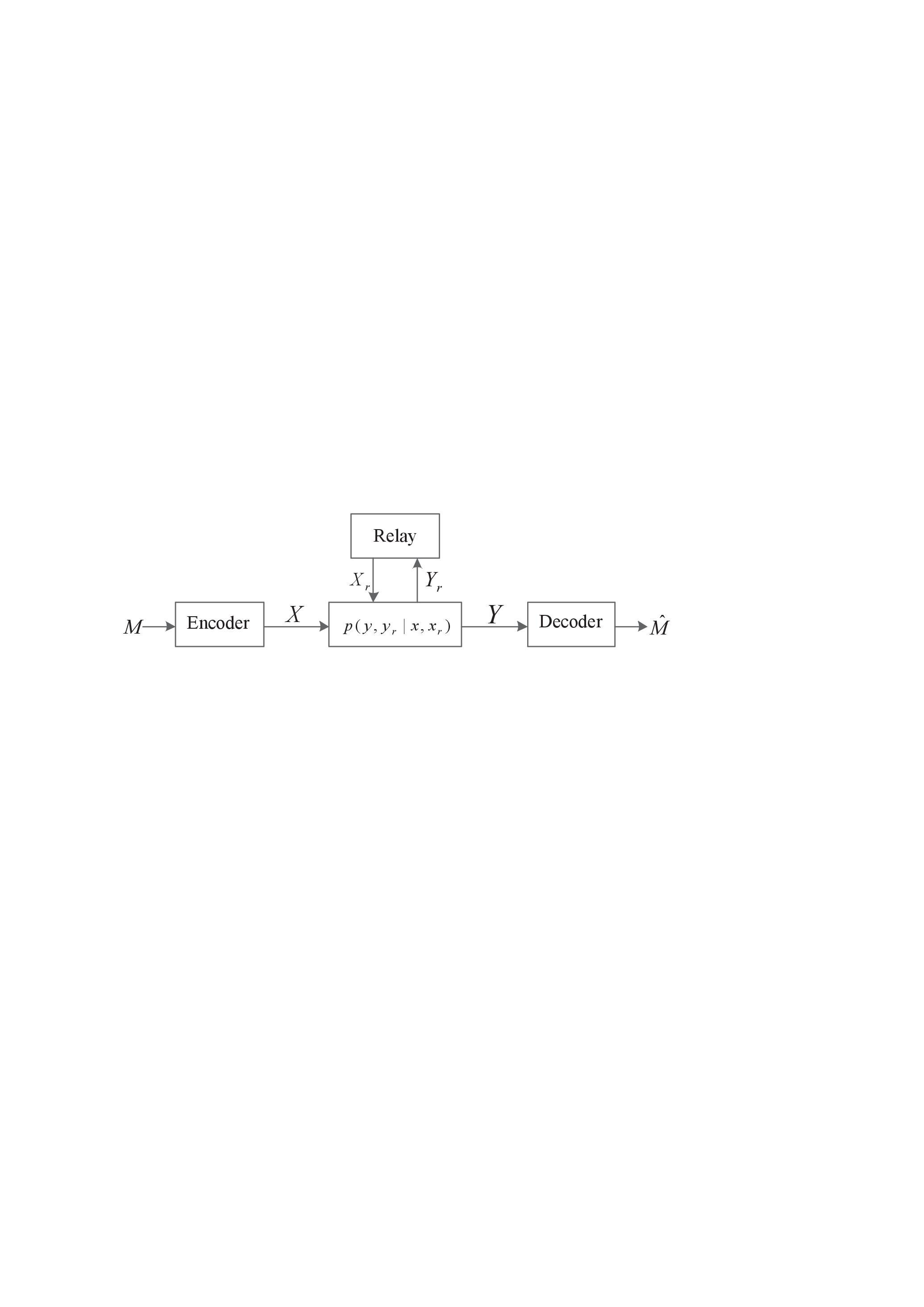}\\
  \caption{One-way relay channel model}\label{fig:one-way}
\end{figure}

A $(n,2^{nR},P_e)$ code for a DM-RC consists of: a
message set $\mathcal{M}=[1:2^{nR}]$; an encoder that assigns a codeword $x^n(m)$ to each message $m\in [1:2^{nR}]$; a relay encoder that assigns at time $i\in [1:n]$ a symbol $x_{ri}(y_r^{i-1})$ to each past received sequence $y_r^{i-1}\in \mathcal{Y}_r^{i-1}$; a decoder that assigns a message $\hat{m}$ or an error message to each received sequence $y^{n}\in \mathcal{Y}^{n}$. The average error probability is
$P_e=\textrm{Pr}\{\hat{M}\neq M\}$. The rate $R$ is said to be achievable for the DM-RC if there exists a sequence of $(2^{nR},n)$ codes with $P_e\rightarrow 0$. The supremum of all achievable rates is the capacity of the DM-RC.

\subsection{Discrete Memoryless TWRC Model}
The discrete memoryless two-way relay channel (DM-TWRC) is
denoted by $(\mathcal{X}_1 \times \mathcal{X}_2 \times
\mathcal{X}_r,p(y_1,y_2,y_r|x_1,x_2,x_r),\mathcal{Y}_1 \times
\mathcal{Y}_2 \times \mathcal{Y}_r)$, as in Figure \ref{p2}. Here
$x_1$ and $y_1$ are the input and output signals of user 1; $x_2$ and
$y_2$ are the input and output signals of user 2; $x_r$ and $y_r$ are
the input and output signals of the relay. We also consider a full-duplex
channel in which all nodes can transmit and receive at the same time.

\begin{figure}[t]
\centering
  \includegraphics[scale=0.7]{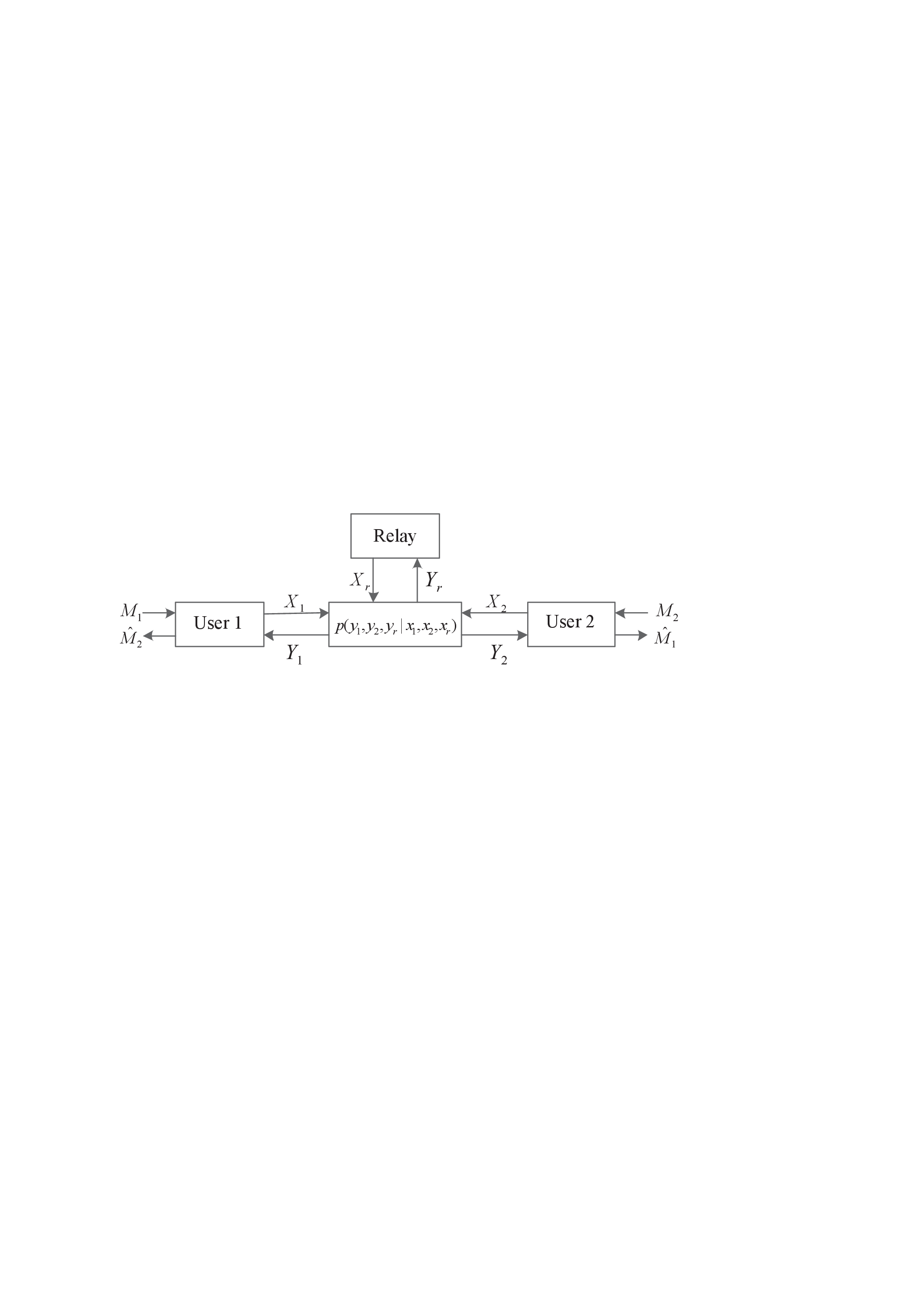}\\
  \caption{Two-way relay channel model}\label{p2}
\end{figure}

A $(n,2^{nR_1},2^{nR_2},P_e)$ code for a DM-TWRC consists of two
message sets $\mathcal{M}_1=[1:2^{nR_1}]$ and
$\mathcal{M}_2=[1:2^{nR_2}]$, three encoding functions
$f_{1,i},f_{2,i},f_{r,i}$, $i=1, \ldots, n$ and two decoding function
$g_1,g_2$.
\begin{align}
x_{1,i}&=f_{1,i}(M_1,Y_{1,1},\ldots ,Y_{1,i-1}),~~~~i=1, \ldots, n\nonumber\\
x_{2,i}&=f_{2,i}(M_2,Y_{2,1},\ldots ,Y_{2,i-1}),~~~~i=1, \ldots, n\nonumber\\
x_{r,i}&=f_{r,i}(Y_{r,1},\ldots ,Y_{r,i-1}),~~~~i=1, \ldots, n\nonumber\\
g_1&:\mathcal{Y}^n_1 \times \mathcal{M}_1 \rightarrow \mathcal{M}_2\nonumber\\
g_2&:\mathcal{Y}^n_2 \times \mathcal{M}_2 \rightarrow \mathcal{M}_1\nonumber
\end{align}
The average error probability is
$P_e=\textrm{Pr}\{g_1(M_1,Y^n_1)\neq
M_2~\textrm{or}~g_2(M_2,Y^n_2)\neq M_1\}$. A rate pair is said to be
achievable if there exists a $(n,2^{nR_1},2^{nR_2},P_e)$ code such
that $P_e\rightarrow 0$ as $n\rightarrow \infty$. The closure of the
set of all achievable rates $(R_1,R_2)$ is the capacity region of the
two-way relay channel.

\subsection{Gaussian TWRC Model}

\begin{figure}[t]
\centering
  \includegraphics[scale=0.6]{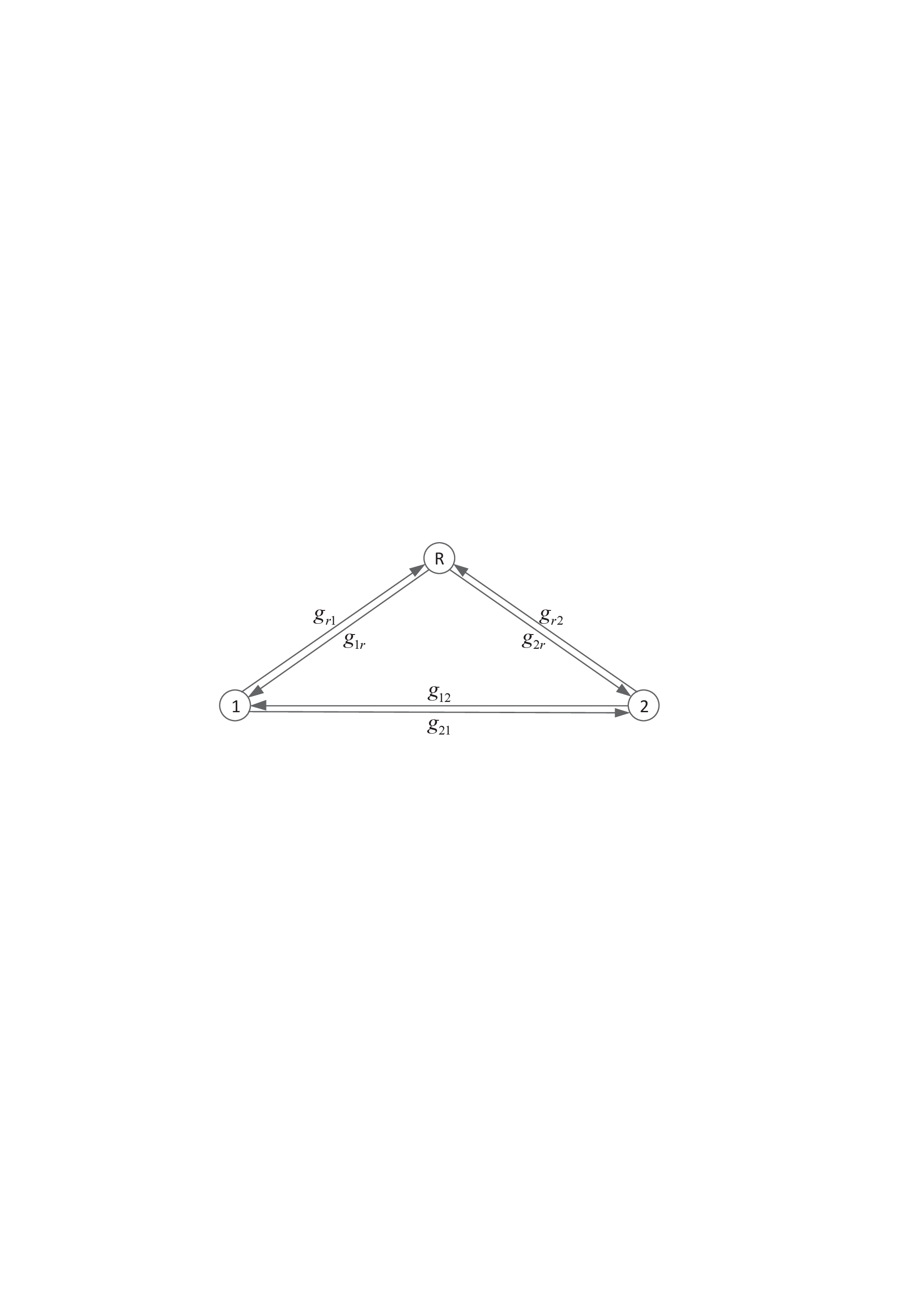}\\
  \caption{Gaussian two-way relay channel model}\label{fig:GTWRC}
\end{figure}

As in Figure \ref{fig:GTWRC}, the Gaussian two-way relay channel can be modeled as:
\begin{align}
  Y_1&=g_{12}X_2+g_{1r}X_r+Z_1 \nonumber \\
  Y_2&=g_{21}X_1+g_{2r}X_r+Z_2 \nonumber \\
  Y_r&=g_{r1}X_1+g_{r2}X_2+Z_r
  \label{GTWRC}
\end{align}
where $Z_1, Z_2, Z_r\sim\mathcal{N}(0,1)$ are independent Gaussian noises and $g_{12}, g_{1r}, g_{21}, g_{2r}, g_{r1}, g_{r2}$ are corresponding channel gains. The average
input power constraints for user 1, user 2 and the relay are all
$P$.

\section{One-Way Relay Channel}\label{sec:one-way}
In the original compress-forward scheme \cite{cover1979capacity} \cite{gamal2006bounds} , the relay forwards the bin index of the description of its received signal. The receiver uses successive decoding to decode the bin index first and then decode the message from the sender. Here we analyze a compress-forward scheme in which the relay forwards the description index directly while the receiver jointly decodes the index and the message at the same time. We show that compress-forward without binning can achieve the same rate as the original compress-forward scheme with binning. Note that different from noisy network coding \cite{sung2011noisy}, compress-forward without binning sends a different message at each block without message repetition.

\subsection{Achievable Rate for Compress-Forward without Binning}
\begin{thm}
\label{thm_rate}
Consider a compress-forward scheme in which the relay does not use Wyner-Ziv binning but sends the compression index directly and the receiver performs joint decoding of both the message and compression index.
The following rate is achievable for one-way relay channel:
\begin{align}
R\leq \min \{&I(X,X_r;Y)-I(\hat{Y}_r;Y_r|X,X_r,Y),\nonumber\\
&I(X;Y,\hat{Y}_r|X_r)\}
\end{align}
subject to
\begin{align}
\label{oneway_const_1}
I(X_r;Y)+I(\hat{Y}_r;X,Y|X_r)\geq I(\hat{Y}_r;Y_r|X_r)
\end{align}
for some $p(x)p(x_r)p(\hat{y}_r|y_r,x_r)p(y,y_r|x,x_r)$.
\end{thm}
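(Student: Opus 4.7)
The plan is to prove achievability via standard block Markov coding over $B$ blocks, where in each block $b$ an independent message $m_b$ is transmitted, and the relay quantizes without binning so that its codebook size $2^{nR_0}$ must satisfy the covering-lemma requirement $R_0 > I(\hat{Y}_r;Y_r|X_r)$. Specifically, I would generate $2^{nR}$ codewords $x^n(m)\sim\prod p(x)$, $2^{nR_0}$ codewords $x_r^n(l)\sim\prod p(x_r)$, and for each $l$ a conditional codebook $\hat{y}_r^n(l'|l)\sim\prod p(\hat{y}_r|x_r(l))$ of size $2^{nR_0}$. The relay sends $x_r^n(l_{b-1})$ in block $b$ and at the end of block $b$ finds $l_b$ with $(x_r^n(l_{b-1}),y_r^n(b),\hat{y}_r^n(l_b|l_{b-1}))$ jointly typical; the covering lemma guarantees success w.h.p. under $R_0 > I(\hat{Y}_r;Y_r|X_r)$.

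For decoding, the key idea is that the receiver defers decoding by one block and jointly decodes the pair $(m_b,l_b)$ using signals from both blocks $b$ and $b+1$. Assuming $l_{b-1}$ is correctly recovered from the previous round, the receiver looks for the unique $(\hat m_b,\hat l_b)$ such that $(x^n(\hat m_b),x_r^n(l_{b-1}),\hat{y}_r^n(\hat l_b|l_{b-1}),y^n(b))$ and $(x_r^n(\hat l_b),y^n(b+1))$ are jointly typical. I would split the error event into the standard four cases according to whether $\hat m_b\neq m_b$ and/or $\hat l_b\neq l_b$. The first case (both correct) vanishes by the LLN/joint typicality. The case $\hat m_b\neq m_b$, $\hat l_b=l_b$ (via the packing lemma) gives the bound $R < I(X;Y,\hat{Y}_r|X_r)$. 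The case $\hat m_b=m_b$, $\hat l_b\neq l_b$ gives $R_0 < I(\hat{Y}_r;X,Y|X_r)+I(X_r;Y)$, which combined with the covering condition yields exactly the constraint \eqref{oneway_const_1}.

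The main obstacle is the case where both indices are wrong: here the fictitious distribution under which $(X^n(\hat m_b),X_r^n(l_{b-1}),\hat{Y}_r^n(\hat l_b|l_{b-1}),Y^n(b))$ is analyzed factors as $p(x)p(x_r)p(y|x_r)p(\hat{y}_r|x_r)$, so a careful KL/packing computation shows the joint typicality probability scales like $2^{-n[I(X;Y,\hat{Y}_r|X_r)+I(\hat{Y}_r;Y|X_r)]}$, and multiplying by the independent check $2^{-nI(X_r;Y)}$ and the count $2^{n(R+R_0)}$ yields
\begin{equation*}
R+R_0 < I(X;Y,\hat{Y}_r|X_r)+I(\hat{Y}_r;Y|X_r)+I(X_r;Y).
\end{equation*}
After substituting $R_0=I(\hat{Y}_r;Y_r|X_r)$ and invoking the Markov chain $\hat{Y}_r-(Y_r,X_r)-(X,Y)$, a short identity gives $I(\hat{Y}_r;Y_r|X_r)-I(\hat{Y}_r;Y|X_r) = I(\hat{Y}_r;Y_r|X,X_r,Y)+I(X;\hat{Y}_r|X_r,Y)$, and this collapses the bound to $R < I(X,X_r;Y) - I(\hat{Y}_r;Y_r|X,X_r,Y)$, matching the first term of the theorem.

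Finally, I would propagate the backward-induction argument: since the correctness of decoding in block $b$ was assumed, one closes the loop by initializing with known $l_0$ and noting that the two-block decoding delay causes only an $O(1/B)$ rate loss, which vanishes as $B\to\infty$. The hardest step is the $E_4$ analysis and the algebraic identity reducing the combined bound to the clean form in the theorem statement; everything else is routine application of the covering and packing lemmas.
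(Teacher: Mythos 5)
Your proposal is correct and takes essentially the same approach as the paper: block coding with $b-1$ messages over $b$ blocks, the same unbinned conditional quantization codebook, joint decoding of $(\hat m_j,\hat k_j)$ from blocks $j$ and $j+1$, the covering lemma for $R_r>I(\hat Y_r;Y_r|X_r)$, and a case split on which of the two indices is wrong (the paper's events $\mathcal E_{3j},\mathcal E_{4j},\mathcal E_{5j}$). Your exponent for the both-wrong event, $I(X;Y,\hat Y_r|X_r)+I(\hat Y_r;Y|X_r)$, equals the paper's $I(X;Y|X_r)+I(\hat Y_r;X,Y|X_r)$ via the identity $I(\hat Y_r;X|X_r,Y)+I(\hat Y_r;Y|X_r)=I(\hat Y_r;X,Y|X_r)$, and the final algebraic collapse using the Markov chain $\hat Y_r-(Y_r,X_r)-(X,Y)$ reproduces the paper's reduction to $R\le I(X,X_r;Y)-I(\hat Y_r;Y_r|X,X_r,Y)$.
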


\begin{proof}
We use a block coding scheme in which each user sends $b-1$ messages over $b$ blocks of $n$ symbols each.\par
\subsubsection{Codebook generation}
Fix $p(x)p(x_r)p(\hat{y}_r|y_r,x_r)$. We randomly and independently generate a codebook for each block $j\in [1:b]$
\begin{itemize}
\item Generate $2^{nR}$ i.i.d. sequences $x^n(m_j)\sim\prod ^n_{i=1}p(x_i) $, where $m_j \in [1:2^{nR}]$.
\item Generate $2^{nR_r}$ i.i.d. sequences $x_r^n(k_{j-1})\sim\prod ^n_{i=1}p(x_{ri})$, where $k_{j-1} \in [1:2^{nR_r}]$.
\item For each $k_{j-1}\in [1:2^{nR_r}]$, generate $2^{nR_r}$ i.i.d. sequences $\hat{y}^n_r(k_j|k_{j-1})\sim\prod ^n_{i=1}p(\hat{y}_{ri}|x_{ri}(k_{j-1})$, where $k_{j}\in [1:2^{nR_r}]$.
\end{itemize}

\subsubsection{Encoding}
The sender transmits $x^n(m_j)$ in block $j$. The relay, upon receiving $y^n_r(j)$, finds an index $k_j$ such that $
((\hat{y}_r^n(k_j|k_{j-1}),y^n_r(j),x^n_r(k_{j-1}))\in A^n_{\epsilon '} $. Assume that such $k_j$ is found, the relay sends $x^n_r(k_j)$ in block $j+1$.\par

\subsubsection{Decoding}
 Assume the receiver has decoded $k_{j-1}$ correctly in block $j$. Then in block $j+1$, the receiver finds a unique pair of $(\hat{m}_j,\hat{k}_j)$ such that
\begin{align}
(x_r^n(\hat{k}_j),y^n(j+1))&\in A^n_{\epsilon} \nonumber \\
\textrm{and}~~~~(x^n(\hat{m}_j),x_r^n(\hat{k}_{j-1}),\hat{y}_r^n(\hat{k}_j|\hat{k}_{j-1}),y^n(j))&\in A^n_{\epsilon}. \nonumber
\end{align}

\subsubsection{Error analysis}
 Assume without loss of generality that $m_j=1$ and $k_{j-1}=k_{j}=1$. First define the following two events:
\begin{align}
\mathcal{E}'_{1j}(k_j)&=\big\{(x_r^n(k_j),y^n(j+1))\in A^n_{\epsilon}\big\} \nonumber \\
\mathcal{E}'_{2j}(m_j,k_j)&=\big\{(x^n({m}_j),x_r^n(1),\hat{y}_r^n({k}_j|1),y^n(j))\in A^n_{\epsilon}\big\}. \nonumber
\end{align}
Then the decoder makes an error only if one or more of the following events occur:
\begin{align}
\mathcal{E}_{1j}=&\big\{(\hat{y}_r^n(k_j|1),y^n_r(j),x^n_r(1))\notin A^n_{\epsilon '}~\textrm{for all}~k_j\in[1:2^{nR_r}]\big\} \nonumber \\
\mathcal{E}_{2j}=&\big\{(x_r^n(1),y^n(j+1))\notin A^n_{\epsilon}~\textrm{or}~(x^n(1),x_r^n(1),\hat{y}_r^n(1|1),\nonumber \\&y^n(j))\notin A^n_{\epsilon} \big\} \nonumber\\
\mathcal{E}_{3j}=&\big\{\mathcal{E}'_{1j}(k_j)~\textrm{and}~\mathcal{E}'_{2j}(1,k_j)~\textrm{for some}~k_j\neq 1\big\} \nonumber \\
\mathcal{E}_{4j}=&\big\{\mathcal{E}'_{1j}(1)~\textrm{and}~\mathcal{E}'_{2j}(m_j,1)~\textrm{for some}~m_j\neq 1\big\} \nonumber \\
\mathcal{E}_{5j}=&\big\{\mathcal{E}'_{1j}(k_j)~\textrm{and}~\mathcal{E}'_{2j}(m_j,k_j)~\textrm{for some}~m_j\neq 1,k_j\neq 1\big\}. \nonumber
\end{align}
Thus, the probability of error is bound as
\begin{align}
P\{\hat{m_j}\neq 1,\hat{k_j}\neq 1\}\leq & P(\mathcal{E}_{1j})+P(\mathcal{E}_{2j}\cap\mathcal{E}_{1j}^c)+P(\mathcal{E}_{3j})\nonumber \\&+P(\mathcal{E}_{4j})+P(\mathcal{E}_{5j}). \nonumber
\end{align}
By the covering lemma\cite{gamal2010lecture}, $P(\mathcal{E}_{1j})\rightarrow 0$ as $n\rightarrow \infty$, if
\begin{align}
\label{rate_oneway_1}
R_r > I(\hat{Y}_r;Y_r|X_r).
\end{align}
By the conditional typicality lemma \cite{cover2006elements}, $P(\mathcal{E}_{2j}\cap\mathcal{E}_{1j}^c) \rightarrow 0$  as $n\rightarrow \infty$.\par
For the rest of the error events, the decoded joint distribution for each event is as follows.
\begin{align*}
\mathcal{E}'_{1j}(k_j)&: p(x_r)p(y)\\
\mathcal{E}'_{2j}(1,k_j)&: p(x)p(x_r)p(\hat{y}_r|x_r)p(y|x,x_r)\\
\mathcal{E}'_{2j}(m_{j},1)&: p(x)p(x_r)p(y,\hat{y}_r|x_r)\\
\mathcal{E}'_{2j}(m_j,k_j)&: p(x)p(x_r)p(\hat{y}_r|x_r)p(y|x_r),
\end{align*}
where $m_j \neq 1, k_j \neq 1$. Using standard joint typicality analysis with the above decoded joint distribution, we can obtain a bound on each error event as follows.
\begin{align}
P(\mathcal{E}_{3j}) &\leq 2^{nR_r}\cdot 2^{-n(I(X_r;Y)-\delta(\epsilon))} \cdot 2^{-n(I(\hat{Y}_r;X,Y|X_r)-3\delta(\epsilon))} \nonumber\\
P(\mathcal{E}_{4j}) &\leq 2^{nR}\cdot 2^{-n(I(X;Y,\hat{Y}_r|X_r)-2\delta(\epsilon))} \nonumber\\
P(\mathcal{E}_{5j}) &\leq2^{nR}\cdot 2^{nR_r}\cdot 2^{-n(I(X_r;Y)-\delta(\epsilon))}\nonumber \\
&~~~\cdot 2^{-n(I(X;Y|X_r)+I(\hat{Y}_r;X,Y|X_r)-3\delta(\epsilon))}. \nonumber
\end{align}
All of them tend to zero as $n\rightarrow \infty $ if
\begin{align}
\label{rate_oneway_2}
R_r &\leq I(X_r;Y)+I(\hat{Y}_r;X,Y|X_r)\\
\label{rate_oneway_3}
R &\leq I(X;Y,\hat{Y}_r|X_r)\\
\label{rate_oneway_4}
R+R_r &\leq I(X_r;Y)+I(X;Y|X_r)+I(\hat{Y}_r;X,Y|X_r) \nonumber \\
&= I(X,X_r;Y)+I(\hat{Y}_r;X,Y|X_r).
\end{align}

Combining the bounds (\ref{rate_oneway_1}) and (\ref{rate_oneway_4}), we have
\begin{align}
\label{rate_oneway_5}
R&\leq I(X,X_r;Y)+I(\hat{Y}_r;X,Y|X_r)-I(\hat{Y}_r;Y_r|X_r) \nonumber \\
&=I(X,X_r;Y)+I(\hat{Y}_r;X,Y|X_r)-I(\hat{Y}_r;Y_r,X,Y|X_r) \nonumber \\
&=I(X,X_r;Y)-I(\hat{Y}_r;Y_r|X,X_r,Y).
\end{align}
Combining (\ref{rate_oneway_1}), (\ref{rate_oneway_2}) and (\ref{rate_oneway_3}), (\ref{rate_oneway_5}), we obtain the result of Theorem \ref{thm_rate}.
\end{proof}

\subsection{Comparison with the Original Compress-Forward Scheme}
\begin{thm}
\label{thm:oneway_ori}
Compress-forward without binning in Theorem \ref{thm_rate} achieves the same rate as the original compress-forward scheme for the one-way relay channel, which is:
\begin{align}
R\leq \min \{&I(X,X_r;Y)-I(\hat{Y}_r;Y_r|X,X_r,Y),\nonumber\\
&I(X;Y,\hat{Y}_r|X_r)\}
\end{align}
for some $p(x)p(x_r)p(\hat{y}_r|y_r,x_r)p(y,y_r|x,x_r)$.
\end{thm}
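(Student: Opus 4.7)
The plan is to show that constraint~(\ref{oneway_const_1}) in Theorem~\ref{thm_rate} can be dropped without shrinking the achievable rate region; once that is done, the resulting expression is exactly the El Gamal--Mohseni--Zahedi form of the original compress-forward rate, which is what Theorem~\ref{thm:oneway_ori} asserts. So the whole proof reduces to an algebraic comparison --- no new coding construction is needed.

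The first step is an identity. Using the Markov chain $\hat{Y}_r-(X_r,Y_r)-(X,Y)$ that is built into the test-channel factorization $p(\hat{y}_r|y_r,x_r)$, I would rewrite the first term inside the minimum as
\begin{align}
&I(X,X_r;Y)-I(\hat{Y}_r;Y_r|X,X_r,Y) \nonumber \\
&\quad = I(X;Y|X_r) + \bigl[I(X_r;Y)+I(\hat{Y}_r;X,Y|X_r)-I(\hat{Y}_r;Y_r|X_r)\bigr]. \nonumber
\end{align}
The bracketed quantity is precisely the slack in constraint~(\ref{oneway_const_1}).

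The second step handles distributions that violate the constraint. When the slack is strictly negative the first term in the min drops strictly below $I(X;Y|X_r)$. However, replacing $\hat{Y}_r$ by a trivial description (independent of $Y_r$ given $X_r$, e.g.\ a constant) makes both $I(\hat{Y}_r;X,Y|X_r)$ and $I(\hat{Y}_r;Y_r|X_r)$ vanish, so the constraint collapses to $I(X_r;Y)\ge 0$, which always holds; for this trivial choice the Theorem~\ref{thm_rate} rate equals $\min\{I(X,X_r;Y),I(X;Y|X_r)\}=I(X;Y|X_r)$. Hence every rate reached by a constraint-violating $\hat{Y}_r$ is already dominated by the rate reached at the trivial (constraint-satisfying) choice, so the supremum of the min expression over all admissible distributions is the same whether or not~(\ref{oneway_const_1}) is imposed, which is exactly Theorem~\ref{thm:oneway_ori}.

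The only genuine obstacle I anticipate is spotting the rearrangement above; once it is written down, the rest is a one-line comparison against the trivial-description rate, and no further random coding or error-event analysis is needed.
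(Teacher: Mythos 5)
Your argument is correct, and it takes a genuinely different route from the paper's. The paper invokes the known equivalent characterization of the original compress-forward region (namely $R\le I(X;Y,\hat Y_r|X_r)$ subject to $I(X_r;Y)\ge I(\hat Y_r;Y_r|X_r,Y)$, cited from \cite{cover1979capacity}, \cite{gamal2006bounds}, \cite{gamal2010lecture}) and then shows that this classical constraint is \emph{tighter} than constraint~(\ref{oneway_const_1}), so that every admissible distribution in the classical form is also admissible in Theorem~\ref{thm_rate}; the converse inclusion is immediate since dropping the constraint can only enlarge the region. You instead argue directly and self-containedly that (\ref{oneway_const_1}) is harmless: using the identity
\begin{align}
I(X,X_r;Y)-I(\hat Y_r;Y_r|X,X_r,Y) = I(X;Y|X_r)+\bigl[I(X_r;Y)+I(\hat Y_r;X,Y|X_r)-I(\hat Y_r;Y_r|X_r)\bigr],\nonumber
\end{align}
which follows from the Markov chain $\hat Y_r-(X_r,Y_r)-(X,Y)$ exactly as in the derivation of~(\ref{rate_oneway_5}), any test channel violating the constraint forces the first term in the min strictly below $I(X;Y|X_r)$, and this is already dominated by the constraint-satisfying trivial description (same $p(x)p(x_r)$, $\hat Y_r$ constant), which yields $\min\{I(X,X_r;Y),I(X;Y|X_r)\}=I(X;Y|X_r)$. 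Your domination argument avoids citing the equivalence between the min form and the constrained form of the original CF rate (which the paper also implicitly relies on to know that the min collapses to the second term under the classical constraint), so it is a bit more elementary; the paper's argument is shorter on the page precisely because it offloads that work to the references.
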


\begin{proof}
To show that the rate region in Theorem \ref{thm_rate} is the same as the rate region in Theorem \ref{thm:oneway_ori}, we need to show that the constraint (\ref{oneway_const_1}) is redundant. Note that an equivalent characterization of the rate region in Theorem \ref{thm:oneway_ori} is as follows \cite{cover1979capacity} \cite{gamal2006bounds} \cite{gamal2010lecture}:
\begin{align}
R\leq I(X;Y,\hat{Y}_r|X_r)
\end{align}
subject to
\begin{align}
\label{oneway_const_2}
I(X_r;Y)\geq I(\hat{Y}_r;Y_r|X_r,Y).
\end{align}
for some $p(x)p(x_r)p(\hat{y}_r|y_r,x_r)$.
Therefore, comparing (\ref{oneway_const_1}) with (\ref{oneway_const_2}), we only need to show that
\begin{align}
I(\hat{Y}_r;Y_r|X_r,Y)\geq I(\hat{Y}_r;Y_r|X_r)-I(\hat{Y}_r;X,Y|X_r).
\end{align}
This is true since
\begin{align}
I(\hat{Y}_r;Y_r|X_r,Y)&=I(\hat{Y}_r;Y_r,X|X_r,Y) \nonumber \\
&=I(X;\hat{Y}_r|X_r,Y)+I(Y_r;\hat{Y}_r|X,X_r,Y) \nonumber \\
&\geq I(Y_r;\hat{Y}_r|X,X_r,Y) \nonumber \\
&=I(\hat{Y}_r;X,Y,Y_r|X_r)-I(\hat{Y}_r;X,Y|X_r) \nonumber \\
&=I(\hat{Y}_r;Y_r|X_r)-I(\hat{Y}_r;X,Y|X_r). \nonumber
\end{align}
\end{proof}

\begin{rem}
If using successive decoding, the rate achieved by compress-forward without binning is strictly less than that with binning. Thus joint decoding is crucial for compress-forward without binning.
\end{rem}

\begin{rem}
Joint decoding does not help improve the rate of the original compress-forward with binning.
\end{rem}

\begin{rem}
The binning technique plays a role of allowing successive decoding instead of joint decoding, thus reduces decoding complexity. However, it has no impact on achievable rate for the one-way relay channel. This effect on decoding complexity is similar to that in decode-forward relaying, in which binning allows successive decoding \cite{cover1979capacity} while no binning requires backward decoding \cite{willems1985the}.
\end{rem}

\section{Two-way Relay Channel}\label{sec:two-way}
In this section, we extend compress-forward without Wyner-Ziv binning but with joint decoding of both the message and compression index to the two-way relay channel. Compared with the original compress-forward scheme for the two-way relay channel \cite{rankov2006achievable}, compress-forward without binning achieves a strictly larger rate region when the channel is asymmetric for two users. Compared with noisy network coding, it achieves a similar rate region but with an extra constraint on the compression rate. Under certain conditions, this constraint is redundant for the Gaussian TWRC. For such cases, compress-forward without binning achieves the same rate region as noisy network coding but has much less decoding delay.

\subsection{Achievable Rate Region for Compress-Forward without Binning}
\begin{thm}
\label{rate_two_way_nobin}
The following rate region is achievable for the two-way relay channel by using compress-forward without binning but with joint decoding:
\begin{align}
\label{rate_thm1}
R_1 \leq \min\{&I(X_1;Y_2,\hat{Y}_r|X_2,X_r),\\
&I(X_1,X_r;Y_2|X_2)-I(\hat{Y}_r;Y_r|X_1,X_2,X_r,Y_2)\} \nonumber \\
R_2 \leq \min\{&I(X_2;Y_1,\hat{Y}_r|X_1,X_r),\nonumber \\
&I(X_2,X_r;Y_1|X_1)-I(\hat{Y}_r;Y_r|X_1,X_2,X_r,Y_1)\}\nonumber
\end{align}
subject to
\begin{align}
\label{const_1}
I(\hat{Y}_r;Y_r|X_1,X_2,X_r,Y_1)&\leq I(X_r;Y_1|X_1)\nonumber\\
I(\hat{Y}_r;Y_r|X_1,X_2,X_r,Y_2)&\leq I(X_r;Y_2|X_2)
\end{align}
for some $p(x_1)p(x_2)p(x_r)p(y_1,y_2,y_r|x_1,x_2,x_r)p(\hat{y}_r|x_r,y_r)$.
\end{thm}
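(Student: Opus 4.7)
The plan is to extend Theorem~\ref{thm_rate}'s block Markov scheme to the two-way setting, with each user using its own transmitted codeword as side information when decoding the other user's message. In each block $j$, user~$i$ transmits $x_i^n(m_{i,j})$ independently, and the relay compresses $y_r^n(j)$ by finding $k_j$ such that $(\hat{y}_r^n(k_j|k_{j-1}), y_r^n(j), x_r^n(k_{j-1})) \in A_{\epsilon'}^n$ (no Wyner--Ziv binning) and transmits $x_r^n(k_j)$ in block $j+1$. Codebook generation is just the parallel combination of two independent user codebooks with the relay's $X_r$ and conditional $\hat{Y}_r$ codebooks from Theorem~\ref{thm_rate}.

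For decoding, user~2, having correctly decoded $k_{j-1}$, searches at the end of block $j+1$ for the unique pair $(\hat{m}_{1,j}, \hat{k}_j)$ such that both $(x_r^n(\hat{k}_j), x_2^n(m_{2,j+1}), y_2^n(j+1)) \in A_\epsilon^n$ and $(x_1^n(\hat{m}_{1,j}), x_2^n(m_{2,j}), x_r^n(\hat{k}_{j-1}), \hat{y}_r^n(\hat{k}_j|\hat{k}_{j-1}), y_2^n(j)) \in A_\epsilon^n$; user~1's decoder is the mirror image. The error-event decomposition replicates the five events $\mathcal{E}_{1j}, \ldots, \mathcal{E}_{5j}$ of Theorem~\ref{thm_rate}, except every joint-typicality probability is now evaluated with the decoder's own input sequence $x_2^n$ (respectively $x_1^n$) treated as a known side-information sequence.

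Running the analysis through, using $X_1 \perp (X_2, X_r)$ together with the Markov chain $\hat{Y}_r - (X_r, Y_r) - (X_1, X_2, Y_2)$ to simplify exponents, produces the covering lower bound $R_r > I(\hat{Y}_r; Y_r|X_r)$ and, from the three data-error events at user~2, the constraints $R_r \leq I(X_r; Y_2|X_2) + I(\hat{Y}_r; X_1, X_2, Y_2|X_r)$, $R_1 \leq I(X_1; Y_2, \hat{Y}_r|X_2, X_r)$, and $R_1 + R_r \leq I(X_1, X_r; Y_2|X_2) + I(\hat{Y}_r; X_1, X_2, Y_2|X_r)$. Eliminating $R_r$ from the sum-rate bound by substituting the covering lower bound and invoking the identity $I(\hat{Y}_r; Y_r|X_r) = I(\hat{Y}_r; X_1, X_2, Y_2, Y_r|X_r)$ (an immediate consequence of the above Markov chain) delivers the second bound on $R_1$ in (\ref{rate_thm1}); the same elimination applied to the wrong-$k$ inequality produces the constraint $I(\hat{Y}_r; Y_r|X_1, X_2, X_r, Y_2) \leq I(X_r; Y_2|X_2)$ of (\ref{const_1}). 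The mirror argument at user~1 yields the bounds on $R_2$ and the other constraint in (\ref{const_1}).

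The main obstacle is that a single relay compression rate $R_r$ must serve both decoders, so its unique covering lower bound must be simultaneously consistent with two separate upper bounds, one coming from each direction. This is exactly the point where the two constraints in (\ref{const_1}) are forced; they do not couple the user rates because the covering constraint depends only on the relay's marginal, and the Fourier--Motzkin-style elimination of $R_r$ used in Theorem~\ref{thm_rate} applies independently in each direction. The only non-mechanical algebraic step is the conditional-mutual-information identity already exploited at the end of the proof of Theorem~\ref{thm:oneway_ori}, which folds the raw exponent cross-term $I(\hat{Y}_r; X_1, X_2, Y_2|X_r)$ into the compact form $I(\hat{Y}_r; Y_r|X_1, X_2, X_r, Y_2)$, now invoked once per direction; everything else is bookkeeping parallel to the one-way proof.
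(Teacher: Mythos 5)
Your proposal is correct and follows essentially the same approach as the paper's own proof: the same block-Markov scheme with no Wyner--Ziv binning, the same five-event error decomposition carried over from the one-way Theorem~\ref{thm_rate} with the decoder's own transmitted sequence as side information, and the same elimination of $R_r$ via the covering lower bound together with the Markov identity $I(\hat{Y}_r;Y_r|X_r)=I(\hat{Y}_r;X_1,X_2,Y_2,Y_r|X_r)$. The only cosmetic difference is that you write out the analysis from user 2's decoder (recovering $m_1$) while the paper writes it out for user 1 (recovering $m_2$); the two directions are mirror images and combining both yields the pair of constraints in (\ref{const_1}) exactly as you describe.
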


\begin{proof}
We use a block coding scheme in which each user sends $b-1$ messages over $b$ blocks of $n$ symbols each.\par
\subsubsection{Codebook generation}
 Fix code distribution $p(x_1)p(x_2)p(x_r)p(\hat{y}_r|x_r,y_r)$. We randomly and independently generate a codebook for each block $j\in [1:b]$
\begin{itemize}
\item Generate $2^{nR_1}$ i.i.d. sequences $x_1^n(m_{1,j})\sim\prod ^n_{i=1}p(x_{1i}) $, where $m_{1,j} \in [1:2^{nR_1}]$.
\item Generate $2^{nR_2}$ i.i.d. sequences $x_2^n(m_{2,j})\sim\prod ^n_{i=1}p(x_{2i}) $, where $m_{2,j} \in [1:2^{nR_2}]$.
\item Generate $2^{nR_r}$ i.i.d. sequences $x_r^n(k_{j-1})\sim\prod ^n_{i=1}p(x_{ri})$, where $k_{j-1} \in [1:2^{nR_r}]$.
\item For each $k_{j-1}\in [1:2^{nR_r}]$, generate $2^{nR_r}$ i.i.d. sequences $\hat{y}^n_r(k_j|k_{j-1})\sim\prod ^n_{i=1}p(\hat{y}_{ri}|x_{ri}(k_{j-1}))$, where $k_{j}\in [1:2^{nR_r}]$.
\end{itemize}

\subsubsection{Encoding}
 User 1 and user 2 transmits $x_1^n(m_{1,j})$ and $x_2^n(m_{2,j})$ in block $j$ respectively. The relay, upon receiving $y^n_r(j)$, finds an index $k_j$ such that $
((\hat{y}_r^n(k_j|k_{j-1}),y^n_r(j),x^n_r(k_{j-1}))\in A^n_{\epsilon '} $. Assume that such $k_j$ is found, the relay sends $x^n_r(k_j)$ in block $j+1$.\par

\subsubsection{Decoding}: We discuss the decoding at user 1. Assume that user 1 has decoded $k_{j-1}$ correctly in block $j$. Then in block $j+1$, user 1 finds a unique pair of $(\hat{m}_{2,j},\hat{k}_j)$ such that
\begin{align}
(x_2^n(\hat{m}_{2,j}),x_r^n(\hat{k}_{j-1}),\hat{y}_r^n(\hat{k}_j|\hat{k}_{j-1}),y_1^n(j),x_1^n(m_{1,j}))&\in A^n_{\epsilon} \nonumber\\
\textrm{and}~~~~~~~~~~~~~~(x_r^n(\hat{k}_j),y_1^n(j+1),x_1^n(m_{1,j+1}))&\in A^n_{\epsilon}. \nonumber
\end{align}

\subsubsection{Error analysis}
Assume without loss of generality that $m_{1,j}=m_{1,j+1}=m_{2,j}=1$ and $k_{j-1}=k_{j}=1$. First define the following two events:
\begin{align}
\mathcal{E}'_{1j}(k_j)=\big\{(x_r^n(k_j),y_1^n(j+1),x_1^n(1))\,&\in A^n_{\epsilon}\big\} \nonumber \\
\mathcal{E}'_{2j}(m_{2,j},k_j)=\big\{(x_2^n({m}_{2,j}),x_r^n(1),\hat{y}_r^n({k}_j|1)&, \nonumber \\
y_1^n(j),x_1^n(1))&\in A^n_{\epsilon}\big\}. \nonumber
\end{align}
Then the decoder makes an error only if one or more of the following events occur:
\begin{align}
\mathcal{E}_{1j}=&\big\{(\hat{y}_r^n(k_j|1),y^n_r(j),x^n_r(1))\notin A^n_{\epsilon '}~\textrm{for all}~k_j\in[1:2^{nR_r}]\big\} \nonumber \\
\mathcal{E}_{2j}=&\big\{(x_r^n(1),y_1^n(j+1),x_1^n(1))\notin A^n_{\epsilon}~\textrm{or}~\nonumber \\
&~~(x_2^n(1),x_r^n(1),\hat{y}_r^n(1|1),y_1^n(j),x_1^n(1))\notin A^n_{\epsilon} \big\} \nonumber\\
\mathcal{E}_{3j}=&\big\{\mathcal{E}'_{1j}(k_j)~\textrm{and}~\mathcal{E}'_{2j}(1,k_j)~\textrm{for some}~k_j\neq 1\big\} \nonumber \\
\mathcal{E}_{4j}=&\big\{\mathcal{E}'_{1j}(1)~\textrm{and}~\mathcal{E}'_{2j}(m_{2,j},1)~\textrm{for some}~m_{2,j}\neq 1\big\} \nonumber \\
\mathcal{E}_{5j}=&\big\{\mathcal{E}'_{1j}(k_j)~\textrm{and}~\mathcal{E}'_{2j}(m_{2,j},k_j)~\textrm{for some}~m_{2,j}\neq 1,k_j\neq 1\big\}. \nonumber
\end{align}
Thus, the probability of error is bound as
\begin{align}
P\{\hat{m}_{2,j}\neq 1,\hat{k_j}\neq 1\}\leq & P(\mathcal{E}_{1j})+P(\mathcal{E}_{2j}\cap\mathcal{E}_{1j}^c)+P(\mathcal{E}_{3j})\nonumber \\
&+P(\mathcal{E}_{4j})+P(\mathcal{E}_{5j}). \nonumber
\end{align}
By the covering lemma, $P(\mathcal{E}_{1j}) \rightarrow 0$ as $n\rightarrow \infty$, if
\begin{align}
\label{rate_1}
R_r > I(\hat{Y}_r;Y_r|X_r).
\end{align}
By the conditional typicality lemma, $P(\mathcal{E}_{2j}\cap\mathcal{E}_{1j}^c) \rightarrow 0$ as $n\rightarrow \infty$. \par
For the rest of the error events, the decoded joint distribution for each event is as follows.
\begin{align*}
\mathcal{E}'_{1j}(k_j)&: p(x_1)p(x_r)p(y_1|x_1)\\
\mathcal{E}'_{2j}(1,k_j)&: p(x_1)p(x_2)p(x_r)p(\hat{y}_r|x_r)p(y_1|x_2,x_r,x_1)\\
\mathcal{E}'_{2j}(m_{2,j},1)&: p(x_1)p(x_2)p(x_r)p(y_1,\hat{y}_r|x_r,x_1)\\
\mathcal{E}'_{2j}(m_{2,j},k_j)&: p(x_1)p(x_2)p(x_r)p(\hat{y}_r|x_r)p(y_1|x_r,x_1),
\end{align*}
where $m_{2,j} \neq 1, k_j \neq 1$. Using standard joint typicality analysis with the above decoded joint distribution, we can obtain a bound on each error event as follows.
\begin{align}
P(\mathcal{E}_{3j})&\leq 2^{nR_r}\cdot 2^{-n(I(X_r;Y_1|X_1)-\delta(\epsilon))}\nonumber \\
 &~~~\cdot 2^{-n(I(\hat{Y}_r;X_1,X_2,Y_1|X_r)-4\delta(\epsilon))} \nonumber\\
P(\mathcal{E}_{4j})&\leq 2^{nR_2}\cdot 2^{-n(I(X_2;Y_1,\hat{Y}_r|X_r,X_1)-3\delta(\epsilon))} \nonumber\\
P(\mathcal{E}_{5j})&\leq 2^{nR_2}\cdot 2^{nR_r}\cdot 2^{-n(I(X_r;Y_1|X_1)-\delta(\epsilon))}\nonumber\\
 &~~~\cdot 2^{-n(I(X_2;Y_1|X_1,X_r)+I(\hat{Y}_r;X_1,X_2,Y_1|X_r)-4\delta(\epsilon))} \nonumber
\end{align}
All of them tend to zero as $n\rightarrow \infty $ if
\begin{align}
\label{rate_Rr}
R_r &\leq I(X_r;Y_1|X_1)+I(\hat{Y}_r;X_1,X_2,Y_1|X_r)\\
\label{rate_twoway_5}
R_2 &\leq I(X_2;Y_1,\hat{Y}_r|X_r,X_1)\\
\label{rate_4}
R_2+R_r &\leq I(X_r;Y_1|X_1)+I(X_2;Y_1|X_1,X_r)\\
&~~~+I(\hat{Y}_r;X_1,X_2,Y_1|X_r) \nonumber \\
&=I(X_2,X_r;Y_1|X_1)+I(\hat{Y}_r;X_1,X_2,Y_1|X_r).\nonumber
\end{align}

Combining the bounds (\ref{rate_1}) and (\ref{rate_4}), we have
\begin{align}
\label{rate_7}
R_2 \leq &I(X_2,X_r;Y_1|X_1)+I(\hat{Y}_r;X_1,X_2,Y_1|X_r)\nonumber\\
&-I(\hat{Y}_r;Y_r|X_r)\nonumber\\
=&I(X_2,X_r;Y_1|X_1)-I(\hat{Y}_r;Y_r|X_1,X_2,X_r,Y_1).
\end{align}
Combining the bounds (\ref{rate_twoway_5}) and (\ref{rate_7}), we obtain the rate constraint on $R_2$ in Theorem \ref{rate_two_way_nobin}. Similar for $R_1$. From (\ref{rate_1}) and (\ref{rate_Rr}), we obtain constraint (\ref{const_1}).
\end{proof}
Next we show the achievable rate region for the Gaussian two-way relay channel using compress-forward without binning. Recall the Gaussian two-way relay channel model (\ref{GTWRC}) in Section \ref{sec:system_model}. Assume $X_1 \sim \mathcal{N}(0,P), X_2 \sim \mathcal{N}(0,P), X_r \sim \mathcal{N}(0,P), \hat{Z} \sim \mathcal{N}(0,\sigma^2)$ to be independent, and $\hat{Y}_r=Y_r+\hat{Z}$. Denote
\begin{align}
\label{four_rate}
R_{11}(\sigma^2)&=C\left(g_{21}^2P+\frac{g_{r1}^2P}{1+\sigma^2}\right)\nonumber\\
R_{12}(\sigma^2)&=C(g_{21}^2P+g_{2r}^2P)-C(1/\sigma^2)\nonumber\\
R_{21}(\sigma^2)&=C\left(g_{12}^2P+\frac{g_{r2}^2P}{1+\sigma^2}\right)\nonumber\\
R_{22}(\sigma^2)&=C(g_{12}^2P+g_{1r}^2P)-C(1/\sigma^2).
\end{align}
Then we have the following rate regions for the Gaussian two-way relay channel.
\begin{cor}
\label{cor:G_nobin}
The following rate region is achievable for the Gaussian two-way relay channel using compress-forward without binning:
\begin{align}
\label{Gaussian_compress}
R_1&\leq\min\{R_{11}(\sigma^2),R_{12}(\sigma^2)\}\nonumber\\
R_2&\leq\min\{R_{21}(\sigma^2),R_{22}(\sigma^2)\}
\end{align}
for some $\sigma^2 \geq \max \{\sigma_{c1}^2,\sigma_{c2}^2\}$, where
\begin{align}
\label{c1c2}
\sigma_{c1}^2&=(1+g_{21}^2P)/(g_{2r}^2P)\nonumber\\
\sigma_{c2}^2&=(1+g_{12}^2P)/(g_{1r}^2P).
\end{align}
and $R_{11}(\sigma^2), R_{12}(\sigma^2), R_{21}(\sigma^2), R_{22}(\sigma^2)$ are defined in (\ref{four_rate}).
\end{cor}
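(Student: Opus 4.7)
The plan is to specialize the rate region of Theorem~\ref{rate_two_way_nobin} to the Gaussian TWRC of Section~\ref{sec:system_model} with the specific distribution $X_1,X_2,X_r\sim\mathcal{N}(0,P)$ independent and $\hat{Y}_r=Y_r+\hat{Z}$ with $\hat{Z}\sim\mathcal{N}(0,\sigma^2)$ independent of everything else. Each of the four mutual information terms in~(\ref{rate_thm1}) should reduce to a Gaussian capacity expression matching~(\ref{four_rate}), and the two Wyner-Ziv-like constraints in~(\ref{const_1}) should both collapse to threshold conditions on $\sigma^2$. I will focus on the $R_2$ bound throughout; the $R_1$ bound follows by exchanging the roles of users 1 and 2.

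For the ``joint'' bound $I(X_2;Y_1,\hat{Y}_r|X_1,X_r)$, I would observe that given $(X_1,X_r)$, the pair $(Y_1,\hat{Y}_r)$ consists of two Gaussian observations of $X_2$ with independent effective noises of variances $1$ and $1+\sigma^2$ (the latter because $Z_r$ and $\hat{Z}$ are independent). Combining them via the standard maximum-ratio calculation yields $C(g_{12}^2P+g_{r2}^2P/(1+\sigma^2))=R_{21}(\sigma^2)$. For the ``cut-set-like'' bound, I would expand $I(X_2,X_r;Y_1|X_1)=h(Y_1|X_1)-h(Y_1|X_1,X_2,X_r)=C(g_{12}^2P+g_{1r}^2P)$ using independence of $X_2$ and $X_r$. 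The compression penalty $I(\hat{Y}_r;Y_r|X_1,X_2,X_r,Y_1)$ should simplify as follows: given $(X_1,X_2,X_r)$, the observation $Y_1$ is a function of $Z_1$ alone, which is independent of $(Z_r,\hat{Z})$, so conditioning on $Y_1$ is vacuous and the term equals $h(\hat{Y}_r|X_1,X_2,X_r)-h(\hat{Z})=C(1/\sigma^2)$. Subtracting gives $R_{22}(\sigma^2)$.

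For the feasibility constraints~(\ref{const_1}), the same simplification shows $I(\hat{Y}_r;Y_r|X_1,X_2,X_r,Y_1)=C(1/\sigma^2)$ while $I(X_r;Y_1|X_1)=C(g_{1r}^2P/(1+g_{12}^2P))$. Requiring the former not to exceed the latter rearranges algebraically to $\sigma^2\geq\sigma_{c2}^2$, and the symmetric analysis at user~2's decoder gives $\sigma^2\geq\sigma_{c1}^2$. Taking the maximum recovers the feasibility condition $\sigma^2\geq\max\{\sigma_{c1}^2,\sigma_{c2}^2\}$ stated in the corollary.

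The main obstacle I anticipate is bookkeeping rather than any genuine mathematical difficulty: one must carefully track which variables are being conditioned on so that each conditional mutual information collapses to a clean ratio of variances. In particular, the key structural observation is that the ``extra'' side observation $Y_1$ (respectively $Y_2$) inside the compression penalty carries no new information about $Y_r$ once $(X_1,X_2,X_r)$ are fixed, because all Gaussian noises in the model are mutually independent. Without this fact, the compression penalty would be a genuinely two-dimensional Gaussian entropy computation rather than the scalar $C(1/\sigma^2)$, and neither the rate expressions nor the threshold conditions would separate into the clean forms given in~(\ref{four_rate}) and~(\ref{c1c2}).
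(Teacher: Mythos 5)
Your proposal is correct and is exactly the specialization that the paper (which gives no explicit proof of this corollary) leaves implicit: plug the jointly Gaussian test channel $X_1,X_2,X_r\sim\mathcal{N}(0,P)$ independent, $\hat{Y}_r=Y_r+\hat{Z}$, into Theorem~\ref{rate_two_way_nobin} and evaluate each mutual information. The key simplifications you identify — MRC combining of the two conditionally independent observations for the first bound, the compression penalty collapsing to $C(1/\sigma^2)$ because $Y_1$ is determined by $Z_1\perp(Z_r,\hat Z)$ once $(X_1,X_2,X_r)$ are fixed, and $I(X_r;Y_1|X_1)=C\!\left(g_{1r}^2P/(1+g_{12}^2P)\right)$ — are all accurate, and the two feasibility constraints in~(\ref{const_1}) do indeed reduce to $\sigma^2\geq\sigma_{c2}^2$ and $\sigma^2\geq\sigma_{c1}^2$ respectively, giving the stated threshold $\sigma^2\geq\max\{\sigma_{c1}^2,\sigma_{c2}^2\}$.
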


\subsection{Comparison with the Original Compress-Forward Scheme}
In this section, we first present the rate region achieved by the original compress-forward scheme for the two way relay channel \cite{rankov2006achievable}. We then show that compress-forward without
Wyner-Ziv binning but with joint decoding can achieve a larger rate region.

\begin{thm}
\label{thm_rate_ori}
\em{[Rankov and Wittneben].}
\emph{
The following rate region is achievable for two-way relay channel with compress-forward scheme:
\begin{align}
\label{rate_thm2}
R_1& \leq I(X_1;Y_2,\hat{Y}_r|X_2,X_r) \nonumber \\
R_2& \leq I(X_2;Y_1,\hat{Y}_r|X_1,X_r)
\end{align}
subject to
\begin{align}
\label{const_2}
&\max \{I(\hat{Y}_r;Y_r|X_1,X_r,Y_1),I(\hat{Y}_r;Y_r|X_2,X_r,Y_2)\}\nonumber\\
&\leq \min \{I(X_r;Y_1|X_1),I(X_r;Y_2|X_2)\}
\end{align}
for some $p(x_1)p(x_2)p(x_r)p(y_1,y_2,y_r|x_1,x_2,x_r)p(\hat{y}_r|x_r,y_r)$.
}
\end{thm}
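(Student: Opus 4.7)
The plan is to follow the classical Wyner-Ziv compress-forward construction of Cover and El~Gamal, extended to the two-way relay channel as in Rankov and Wittneben. I would use block-Markov coding over $b$ blocks of length $n$; the key departure from Theorem~\ref{rate_two_way_nobin} is that the relay now partitions its compression codebook into bins, transmits only the bin index, and each user performs successive rather than joint decoding.

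For each block $j$ I would fix the same product distribution as in Theorem~\ref{rate_two_way_nobin}, generate the codebooks $x_1^n(m_{1,j})$, $x_2^n(m_{2,j})$ and $x_r^n(l_{j-1})$ in exactly the same way, but generate $2^{n\hat R}$ compression sequences $\hat y_r^n(k_j|l_{j-1})$ and randomly partition them into $2^{nR_r}$ equal-size bins indexed by $l_j$. After observing $y_r^n(j)$, the relay finds some $k_j$ with $(\hat y_r^n(k_j|l_{j-1}),y_r^n(j),x_r^n(l_{j-1}))\in A^n_{\epsilon'}$, which by the covering lemma succeeds if $\hat R>I(\hat Y_r;Y_r|X_r)$. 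The relay then transmits $x_r^n(l_j)$ in block $j+1$, where $l_j$ is the bin containing $k_j$.

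Decoding at user~1 would proceed in three successive stages, assuming $l_{j-1}$ has already been recovered. In block $j+1$, user~1 first decodes $l_j$ from $y_1^n(j+1)$, treating its own $x_1^n(m_{1,j+1})$ as known; by the packing lemma this succeeds if $R_r<I(X_r;Y_1|X_1)$. Given $l_j$, user~1 then performs Wyner-Ziv decoding inside bin $l_j$, searching for the unique $k_j$ jointly typical with the side information $(x_1^n(m_{1,j}),x_r^n(l_{j-1}),y_1^n(j))$, which succeeds if $\hat R-R_r<I(\hat Y_r;X_1,Y_1|X_r)$. Finally, with $\hat y_r^n(k_j|l_{j-1})$ in hand, user~1 decodes $m_{2,j}$ via joint typicality, producing the message rate bound $R_2\leq I(X_2;Y_1,\hat Y_r|X_1,X_r)$. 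Eliminating $\hat R$ between the covering and Wyner-Ziv inequalities and invoking the Markov chain $\hat Y_r-(Y_r,X_r)-(X_1,Y_1)$ to rewrite $I(\hat Y_r;Y_r|X_r)-I(\hat Y_r;X_1,Y_1|X_r)$ as $I(\hat Y_r;Y_r|X_1,X_r,Y_1)$ yields the user-1 half of (\ref{const_2}). A symmetric argument at user~2 supplies the bound on $R_1$ and the other half of (\ref{const_2}); requiring a single common $R_r$ to satisfy both bin-decoding inequalities simultaneously forces the max of the two Wyner-Ziv quantities on the left to lie below the min of the two bin-decoding rates on the right, which is exactly the form of (\ref{const_2}).

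I do not anticipate any real obstacle beyond bookkeeping. The only non-routine manipulation is the chain-rule identity converting the covering/Wyner-Ziv pair into the Markov form of (\ref{const_2}); this is essentially the same reduction used implicitly in Theorem~\ref{thm:oneway_ori} for the one-way channel and carries over verbatim to each direction of the two-way setting.
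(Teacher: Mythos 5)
Your proposal is correct and follows essentially the same three-step successive-decoding argument as the paper (bin-index decode, Wyner--Ziv decode within the bin using $(X_1,Y_1)$ as side information, then message decode), including the corrected within-bin rate $I(\hat Y_r;X_1,Y_1|X_r)$ rather than the looser $I(\hat Y_r;Y_1|X_1,X_r)$ that appears in Rankov--Wittneben and that the paper flags in Remark~\ref{rem:error}. The only difference is notational ($\hat R$ versus $R_r+R_r'$), and your chain-rule conversion via the Markov relation $\hat Y_r \to (Y_r,X_r) \to (X_1,Y_1)$ is the same reduction the paper invokes implicitly when it states that (\ref{const_2}) follows from (\ref{rate_twoway_ori_1})--(\ref{rate_twoway_ori_3}).
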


Next we present a short proof to show the difference from compress-forward without binning. The proof follows the same lines as in \cite{rankov2006achievable}, but we also correct an error in the analysis in \cite{rankov2006achievable} as point out in Remark \ref{rem:error}.
\begin{proof}
We use a block coding scheme in which each user sends $b-1$ messages over $b$ blocks of $n$ symbols each.\par
\subsubsection{Codebook generation}
 Fix code distribution $p(x_1)p(x_2)p(x_r)p(\hat{y}_r|x_r,y_r)$. We randomly and independently generate a codebook for each block $j\in [1:b]$
\begin{itemize}
\item Generate $2^{nR_1}$ i.i.d. sequences $x_1^n(m_{1,j})\sim\prod ^n_{i=1}p(x_{1i}) $, where $m_{1,j} \in [1:2^{nR_1}]$.
\item Generate $2^{nR_2}$ i.i.d. sequences $x_2^n(m_{2,j})\sim\prod ^n_{i=1}p(x_{2i}) $, where $m_{2,j} \in [1:2^{nR_2}]$.
\item Generate $2^{nR_r}$ i.i.d. sequences $x_r^n(q_{j-1})\sim\prod ^n_{i=1}p(x_{ri})$, where $q_{j-1} \in [1:2^{nR_r}]$.
\item For each $q_{j-1}\in [1:2^{nR_r}]$, generate $2^{n(R_r+R'_r)}$ i.i.d. sequences $\hat{y}^n_r(q_j,r_j|q_{j-1})\sim\prod ^n_{i=1}p(\hat{y}_{ri}|x_{ri}(q_{j-1}))$. Throw them into $2^{nR_r}$ bins, where $q_j\in [1:2^{nR_r}]$ denotes the bin index and $r_j\in [1:2^{nR'_r}]$ denotes the relative index within a bin.
\end{itemize}

\subsubsection{Encoding}
 User 1 and user 2 transmits $x_1^n(m_{1,j})$ and $x_2^n(m_{2,j})$ in block $j$ separately. The relay, upon receiving $y^n_r(j)$, finds an index pair $(q_j,r_j)$ such that $
((\hat{y}_r^n(q_j,r_j|q_{j-1}),y^n_r(j),x^n_r(q_{j-1}))\in A^n_{\epsilon '} $. Assume that such $(q_j,r_j)$ is found, the relay sends $x^n_r(q_j)$ in block $j+1$. By the covering lemma, the probability that there is no such $(q_j,r_j)$ tends to 0 as $n\rightarrow \infty$ if
\begin{align}
\label{rate_twoway_ori_1}
R_r+R'_r > I(\hat{Y}_r;Y_r|X_r).
\end{align}

\subsubsection{Decoding}
Each user applies 3-step successive decoding.
At the end of block $j$, user 1 determines the unique bin $\hat{q}_{j-1}$ such that
\begin{align*}
(x^n_r(\hat{q}_{j-1}),y^n_1(j),x^n_1(m_{1,j})) \in A^n_{\epsilon}.
\end{align*}
Similar for user 2. Both succeed with high probability if
\begin{align}
\label{rate_twoway_ori_2}
R_r\leq \min \{I(X_r;Y_1|X_1),I(X_r;Y_2|X_2)\}.
\end{align}
Then user 1 uses $y^n_1(j-1)$ to determine the unique $\hat{r}_{j-1}$ such that
\begin{align*}
(\hat{y}^n_r(\hat{q}_{j-1},\hat{r}_{j-1}|\hat{q}_{j-2}),y^n_1(j-1),x^n_r(\hat{q}_{j-2}),x^n_1(m_{1,j-1}))\in A^n_{\epsilon}.
\end{align*}
Similar for user 2. Both succeed with high probability if
\begin{align}
\label{rate_twoway_ori_3}
R'_r\leq \min \{I(\hat{Y}_r;X_1,Y_1|X_r),I(\hat{Y}_r;X_2,Y_2|X_r)\}.
\end{align}
Finally, user 1 uses both $y^n_1(j-1)$ and $\hat{y}^n_r(j-1)$ to determines the unique $\hat{m}_{2,j-1}$ such that
\begin{align*}
(x^n_2(\hat{m}_{2,j-1}),\hat{y}^n_r(\hat{q}_{j-1},\hat{r}_{j-1}|\hat{q}_{j-2}),y^n_1(j-1),x^n_r(\hat{q}_{j-2}), & \\x^n_1(m_{1,j-1}))\in A^n_{\epsilon}.&
\end{align*}
Similar for user 2. Both succeed with high probability if
\begin{align}
R_1& \leq I(X_1;Y_2,\hat{Y}_r|X_2,X_r)\nonumber\\
R_2& \leq I(X_2;Y_1,\hat{Y}_r|X_1,X_r).
\end{align}
The constraint (\ref{const_2}) comes from (\ref{rate_twoway_ori_1}), (\ref{rate_twoway_ori_2}) and (\ref{rate_twoway_ori_3}).
\end{proof}

\begin{rem}
\label{rem:error}
We note an error in the proof of \cite{rankov2006achievable}. In \cite{rankov2006achievable}, when user 1 determines the unique $\hat{r}_{j-1}$, it is stated that it succeeds with high probability if
\begin{align}
\label{wrong}
R'_r\leq I(\hat{Y}_r;Y_1|X_1,X_r),
\end{align}
which corresponds to (\ref{rate_twoway_ori_3}) in our analysis. However, this is incorrect since the decoded joint distribution of this error event is $p(x_1)p(x_r)p(\hat{y}_r|x_r)p(y_1|x_1,x_r)$. Therefore, the error probability can be bounded as
\begin{align*}
&P(\mathcal{E})=\sum _{(x_1,x_r,\hat{y}_r,y_1)\in A^n_{\epsilon}} p(x_1)p(x_r)p(\hat{y}_r|x_r)p(y_1|x_1,x_r)\\
&\leq 2^{n(H(X_1,X_r,\hat{Y}_r,Y_1)-H(X_1)-H(X_r)-H(\hat{Y}_r|X_r)-H(Y_1|X_1,X_r)-3\delta(\epsilon))}\\
&=2^{-n(I(\hat{Y}_r;X_1,Y_1|X_r)-3\delta(\epsilon))},
\end{align*}
which tends to zero as as $n\rightarrow \infty $ if (\ref{rate_twoway_ori_3}) is satisfied instead of (\ref{wrong}).
\end{rem}

\begin{figure}[t]
\centering
  \includegraphics[scale=0.6]{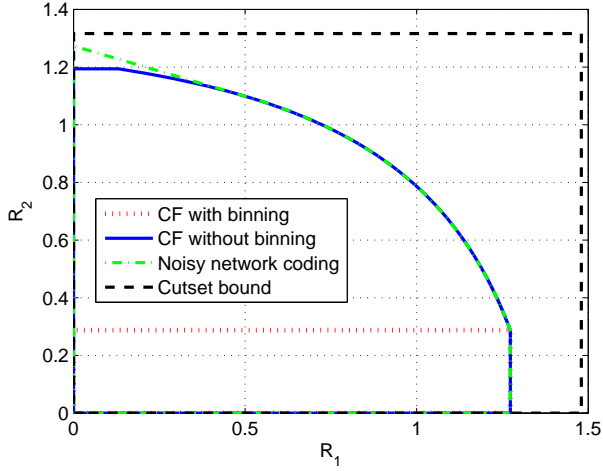}\\
  \caption{Rate regions for
    $P=20,g_{r1}=g_{1r}=2,g_{r2}=g_{2r}=0.5,g_{12}=g_{21}=0.1$}\label{s1}
\end{figure}

\begin{cor}
In the two-way relay channel, the rate region achieved by compress-forward without binning in Theorem \ref{rate_two_way_nobin} is larger than the rate region achieved by the original compress-forward scheme in Theorem
\ref{thm_rate_ori} when the channel is asymmetric for the two users. The two regions are equal if and only if the following conditions holds:
\begin{align}
I(X_r;Y_1|X_1)&=I(X_r;Y_2|X_2)\nonumber \\
I(\hat{Y}_r;Y_r|X_1,X_r,Y_1)&=I(\hat{Y}_r;Y_r|X_2,X_r,Y_2).
\label{eq_con}
\end{align}
\end{cor}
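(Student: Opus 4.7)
My plan is to compare the two rate regions on a per-distribution basis and then take unions over input distributions. Fix a distribution $p(x_1)p(x_2)p(x_r)p(\hat y_r|x_r,y_r)$.

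The first step is containment: I would show that the original region (Theorem \ref{thm_rate_ori}) sits inside the no-binning region (Theorem \ref{rate_two_way_nobin}). This rests on two identities. First, the relaxed compression constraint
\begin{align*}
I(\hat Y_r;Y_r|X_1,X_2,X_r,Y_j)\le I(\hat Y_r;Y_r|X_j,X_r,Y_j)
\end{align*}
follows from the Markov chain $X_i - (Y_r,X_r) - \hat Y_r$ (giving $I(\hat Y_r;X_i|Y_r,X_r,Y_j)=0$) plus the chain rule, so whenever (\ref{const_2}) holds, (\ref{const_1}) also holds. Second, expanding both $I(X_i,X_r;Y_j|X_j)$ and $I(X_i;Y_j,\hat Y_r|X_j,X_r)$ by the chain rule and using the identity $I(\hat Y_r;Y_r,X_i|X_j,X_r,Y_j)=I(\hat Y_r;Y_r|X_j,X_r,Y_j)$, the gap between the two arguments of each $\min$ in (\ref{rate_thm1}) reduces to $I(X_r;Y_j|X_j)-I(\hat Y_r;Y_r|X_j,X_r,Y_j)$, which is non-negative precisely when the ``self'' part of (\ref{const_2}) holds. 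Thus under (\ref{const_2}) the extra $\min$-arguments are redundant and the two rate expressions coincide.

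For strict containment under an asymmetric channel, I would exhibit a distribution that violates the cross part of (\ref{const_2}) while still satisfying (\ref{const_1}). If, say, the relay-to-user-$2$ link is weaker than the relay-to-user-$1$ link, then for a suitable compression noise one has $I(X_r;Y_2|X_2)<I(\hat Y_r;Y_r|X_1,X_r,Y_1)$, emptying the region of Theorem \ref{thm_rate_ori}, while the relaxed $I(\hat Y_r;Y_r|X_1,X_2,X_r,Y_1)\le I(X_r;Y_1|X_1)$ may still hold and Theorem \ref{rate_two_way_nobin} yields a positive $R_2 = I(X_2,X_r;Y_1|X_1)-I(\hat Y_r;Y_r|X_1,X_2,X_r,Y_1)$. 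The setting of Fig.~\ref{s1} provides a concrete numerical instance.

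Finally, for the iff equality I would argue as follows. Under (\ref{eq_con}) both the $\max$ and the $\min$ in (\ref{const_2}) are attained by either argument, so the max-min collapses to the self constraints $I(\hat Y_r;Y_r|X_j,X_r,Y_j)\le I(X_r;Y_j|X_j)$; combined with the redundancy of the extra $\min$-arguments established above, this forces the two regions to coincide. Conversely, if either equality in (\ref{eq_con}) fails, I would tune the compression-noise variance so that the cross constraint of (\ref{const_2}) binds strictly before the self constraint, reproducing the strict-containment construction above. The main obstacle is this converse: verifying that for every failure mode of (\ref{eq_con}) one can indeed build an admissible $\hat Y_r$ whose rate point under Theorem \ref{rate_two_way_nobin} is not already attainable under Theorem \ref{thm_rate_ori} through some other compression law.
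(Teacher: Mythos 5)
Your containment argument and the per-distribution algebra match the paper's proof exactly: the Markov identity $I(\hat Y_r;Y_r,X_i|X_j,X_r,Y_j)=I(\hat Y_r;Y_r|X_j,X_r,Y_j)$ is precisely what the paper uses to show (\ref{const_2}) implies (\ref{const_1}), and your ``gap'' computation $B-A = I(X_r;Y_j|X_j)-I(\hat Y_r;Y_r|X_j,X_r,Y_j)$ is algebraically identical to the paper's inequality chain ending at step $(a)$. So for the ``if'' direction you are on the same track, just presenting the bookkeeping as a difference rather than a chain of $\le$.

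The place where you are — correctly — uneasy is the ``only if'' / strictness claim, and the honest flag you raise at the end is the right instinct. Your proposed construction (pick a compression law that satisfies (\ref{const_1}) but violates the cross part of (\ref{const_2})) shows that the distribution-by-distribution constraint set of Theorem \ref{rate_two_way_nobin} is strictly larger, but it does not immediately show the \emph{union over all} $\hat Y_r$ is strictly larger, since the excluded compression law is simply dropped from Theorem \ref{thm_rate_ori}'s union rather than emptying that region. The paper's own proof has the same weakness: it derives the tightness conditions for step $(a)$ (which correspond to the $\max$ and $\min$ of (\ref{const_2}) coinciding) and then states ``otherwise, it will be strictly smaller'' without arguing explicitly about the union over compression laws. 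So you are not missing an idea the paper supplies; you have reproduced the paper's argument and identified a genuine looseness that the paper glosses over. If you wanted to tighten this, the standard route is to work on the boundary of the Pareto frontier and show that the optimizing $\hat Y_r$ for Theorem \ref{rate_two_way_nobin} at some corner point fails (\ref{const_2}) strictly while no other $\hat Y_r$ satisfying (\ref{const_2}) can recover that point — which is in effect what the numerical example in Fig.~\ref{s1} illustrates for the Gaussian case.

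Two small corrections to your write-up: the Markov chain you want is $(X_1,X_2,Y_1,Y_2)-(X_r,Y_r)-\hat Y_r$ (not $X_i-(Y_r,X_r)-\hat Y_r$, which omits the needed conditioning on the other variables), though the conclusion $I(\hat Y_r;X_i|X_j,X_r,Y_j,Y_r)=0$ is the same; and ``emptying the region of Theorem \ref{thm_rate_ori}'' should read ``removing this compression law from Theorem \ref{thm_rate_ori}'s feasible set,'' since the region is a union and is not emptied by one infeasible choice.
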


 \begin{proof}
 First, we show that the constraint of Theorem \ref{rate_two_way_nobin} is looser than that of Theorem \ref{thm_rate_ori}. This is true since from (\ref{const_2}) we have
\begin{align}
I(X_r;Y_1|X_1)& \geq I(\hat{Y}_r;Y_r|X_1,X_r,Y_1)\nonumber\\
&=I(\hat{Y}_r;X_2,Y_r|X_1,X_r,Y_1)\nonumber\\
&\geq I(\hat{Y}_r;Y_r|X_1,X_2,X_r,Y_1)
\label{rate_13}
\end{align}
where (\ref{rate_13}) is the right hand side of the first term in (\ref{const_1}). Similar for the other term.\par
Next we show that (\ref{rate_thm2}) and (\ref{const_2}) imply (\ref{rate_thm1}). From (\ref{rate_thm2}) we have
\begin{align}
R_2\leq &I(X_2;Y_1,\hat{Y}_r|X_1,X_r)\nonumber \\
=&I(X_2;Y_1|X_1,X_r)+I(\hat{Y}_r;X_2|Y_1,X_1,X_r)\nonumber \\
=&I(X_2,X_r;Y_1|X_1)-I(X_r;Y_1|X_1)+I(\hat{Y}_r;X_2|Y_1,X_1,X_r)\nonumber \\
\overset{(a)}{\leq} &I(X_2,X_r;Y_1|X_1)-I(\hat{Y}_r;Y_r|X_1,X_r,Y_1)\nonumber \\
&+I(\hat{Y}_r;X_2|Y_1,X_1,X_r)\nonumber \\
=&I(X_2,X_r;Y_1|X_1)-I(\hat{Y}_r;Y_r|X_1,X_2,X_r,Y_1)\nonumber
\end{align}
where $(a)$ follows from the constraint of (\ref{const_2}) in Theorem \ref{thm_rate_ori}. The equality holds when
\begin{align}
&I(X_r;Y_1|X_1)=\min\{I(X_r;Y_1|X_1),I(X_r;Y_2|X_2)\}\nonumber \\
&I(\hat{Y}_r;Y_r|X_1,X_r,Y_1)\nonumber\\
&=\max (I(\hat{Y}_r;Y_r|X_1,X_r,Y_1),I(\hat{Y}_r;Y_r|X_2,X_r,Y_2)).\nonumber
\end{align}
Similar for $R_1$, the equality holds when
\begin{align}
&I(X_r;Y_2|X_2)=\min\{I(X_r;Y_1|X_1),I(X_r;Y_2|X_2)\}\nonumber \\
&I(\hat{Y}_r;Y_r|X_2,X_r,Y_2)\nonumber\\
&=\max (I(\hat{Y}_r;Y_r|X_1,X_r,Y_1),I(\hat{Y}_r;Y_r|X_2,X_r,Y_2)).\nonumber
\end{align}
Therefore, if and only if condition (\ref{eq_con}) holds,
the original compress-forward scheme achieves same rate region as compress-forward scheme without binning; otherwise, it will be strictly smaller.
\end{proof}

\begin{rem}
For the two-way relay channel, compress-forward without binning achieves larger rate region than the original compress-forward scheme when the channel is asymmetric for the two users. This is because, with binning and successive decoding, the compression rate is limited by the weaker one of the channels from the relay to two users. But without binning, this limitation is removed.
\end{rem}

\begin{figure}[t]
\centering
  \includegraphics[scale=0.6]{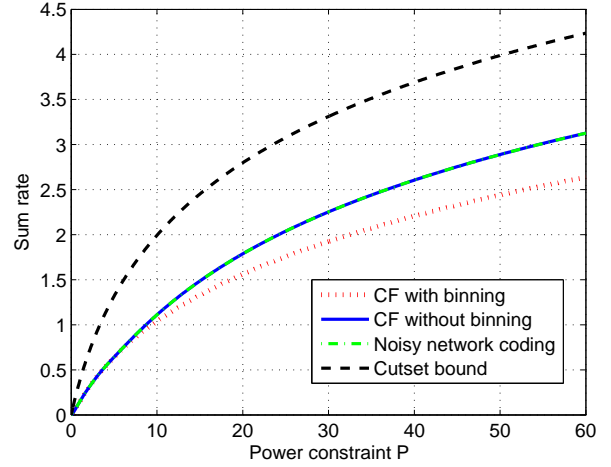}\\
  \caption{Sum rate for
    $g_{r1}=g_{1r}=2,g_{r2}=g_{2r}=0.5,g_{12}=g_{21}=0.1$}\label{s2}
\end{figure}

\subsection{Comparison with Noisy Network Coding}
In this section, we compare the rate region achieved by compress-forward without binning with that achieved by noisy network coding \cite{sung2011noisy} for the two way relay channel.
\begin{thm}
\label{thm_noisy}
\em{[Lim, Kim, El Gamal, and Chung].} \emph{The following rate region is achievable for the two-way relay channel using noisy network coding:
\begin{align}
\label{rate_noisy}
R_1 \leq \min\{&I(X_1;Y_2,\hat{Y}_r|X_2,X_r),\\
&I(X_1,X_r;Y_2|X_2)-I(\hat{Y}_r;Y_r|X_1,X_2,X_r,Y_2)\} \nonumber \\
R_2 \leq \min\{&I(X_2;Y_1,\hat{Y}_r|X_1,X_r),\nonumber \\
&I(X_2,X_r;Y_1|X_1)-I(\hat{Y}_r;Y_r|X_1,X_2,X_r,Y_1)\}\nonumber
\end{align}
for some $p(x_1)p(x_2)p(x_r)p(y_1,y_2,y_r|x_1,x_2,x_r)p(\hat{y}_r|x_r,y_r)$.
}
\end{thm}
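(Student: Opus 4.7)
The plan is to prove Theorem~\ref{thm_noisy} via the noisy network coding argument of Lim, Kim, El~Gamal, and Chung~\cite{sung2011noisy}, which differs from the proof of Theorem~\ref{rate_two_way_nobin} in two essential ways: each user transmits the \emph{same} message across all $b$ blocks using independently generated codebooks, and the destinations perform \emph{simultaneous non-unique} joint typicality decoding across all $b$ blocks instead of the block-by-block decoding used in Theorem~\ref{rate_two_way_nobin}. These two modifications are precisely what allow the auxiliary constraint~(\ref{const_1}) to be dropped.

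First I would fix $p(x_1)p(x_2)p(x_r)p(\hat{y}_r|x_r,y_r)$ and, for each block $j\in[1:b]$, independently generate codebooks $\{x_1^n(m_1;j)\}$, $\{x_2^n(m_2;j)\}$, $\{x_r^n(l_{j-1};j)\}$ and, conditionally on each $l_{j-1}$, a codebook $\{\hat{y}_r^n(l_j|l_{j-1};j)\}$, each of size $2^{nR_r}$ (indices in $[1:2^{nR_r}]$). At the encoders, user~1 and user~2 transmit $x_1^n(m_1;j)$ and $x_2^n(m_2;j)$ in \emph{every} block, while the relay finds $l_j$ such that $(\hat{y}_r^n(l_j|l_{j-1};j), y_r^n(j), x_r^n(l_{j-1};j))\in A^n_{\epsilon'}$; the covering lemma guarantees success provided $R_r > I(\hat{Y}_r;Y_r|X_r)$.

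For decoding at user~1, after observing $y_1^n(1),\ldots,y_1^n(b)$ and knowing $m_1$, I would have user~1 search for the unique $\hat{m}_2$ for which there exists some index sequence $(\hat{l}_1,\ldots,\hat{l}_b)$ (with $\hat{l}_0=1$) such that
\begin{align*}
(x_1^n(m_1;j), x_2^n(\hat{m}_2;j), x_r^n(\hat{l}_{j-1};j), \hat{y}_r^n(\hat{l}_j|\hat{l}_{j-1};j), y_1^n(j)) \in A^n_\epsilon
\end{align*}
for every $j\in[1:b]$; the compression indices are thus decoded non-uniquely. The error analysis would partition the event $\hat{m}_2\neq m_2$ according to the subset $\mathcal{S}\subseteq[1:b]$ of blocks in which $\hat{l}_j$ differs from the true $l_j$, apply the packing lemma to each pattern, and union-bound over $\mathcal{S}$. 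For a given $\mathcal{S}$, the resulting exponent has the form $2^{nR_2}\cdot 2^{n|\mathcal{S}|R_r}$ times a product of joint-typicality penalties; combining the tightest such bound with the covering constraint $R_r = I(\hat{Y}_r;Y_r|X_r) + \delta$ and letting $b\to\infty$ so that the $1/b$ overhead vanishes yields exactly the two bounds on $R_2$ in~(\ref{rate_noisy}). Crucially, the extra $I(\hat{Y}_r;Y_r|X_r)$ introduced by covering cancels against the $R_r$ factor associated with the blocks in $\mathcal{S}$, which is exactly why constraint~(\ref{const_1}) disappears. The symmetric argument at user~2 gives the $R_1$ bound.

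The main obstacle is the combinatorial bookkeeping over the subsets $\mathcal{S}$: one must verify that for every non-empty $\mathcal{S}$ the resulting exponent, after substituting $R_r$ and taking $b\to\infty$, is dominated by the two stated mutual information terms, with no residual side constraint surviving. This is the heart of the noisy network coding argument in~\cite{sung2011noisy} and amounts to a careful but otherwise standard application of the packing lemma in the multi-block simultaneous decoding setting; no new probabilistic ingredient is required beyond those already used in the proof of Theorem~\ref{rate_two_way_nobin}.
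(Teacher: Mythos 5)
The paper does not actually prove Theorem~\ref{thm_noisy}; it is stated as a cited result of Lim, Kim, El~Gamal, and Chung \cite{sung2011noisy}, specialized to the two-way relay channel. Your sketch is a faithful and correct outline of exactly that proof: message repetition over $b$ blocks with independent per-block codebooks, no Wyner--Ziv binning at the relay, and simultaneous non-unique joint typicality decoding of the compression indices across all blocks, with the union bound taken over the subset $\mathcal{S}\subseteq[1:b]$ of blocks on which the candidate compression index disagrees with the true one. You also correctly identify the key mechanism --- that after substituting $R_r = I(\hat{Y}_r;Y_r|X_r)+\delta$ the per-block covering penalty cancels against the $|\mathcal{S}|R_r$ term, so that as $b\to\infty$ no residual constraint analogous to~(\ref{const_1}) survives. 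This is the same route as the reference, so there is nothing genuinely different to compare; the only caveat is that the combinatorial step over $\mathcal{S}$ (verifying that every nonempty pattern is dominated by the two stated mutual-information bounds) is stated as an intention rather than carried out, and that step is precisely where the work lies.
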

Comparing Theorem \ref{rate_two_way_nobin} with Theorem \ref{thm_noisy}, we find that the rate constraints for $R_1$ and $R_2$ in compress-forward without binning (\ref{rate_thm1}) are the same as those in noisy network coding (\ref{rate_noisy}). However, compress-forward without binning has an extra constraint on the compression rate as in (\ref{const_1}). Therefore, in general, noisy network coding achieves a larger rate region than compress-forward without binning. Next, we show that for the Gaussian two-way relay channel, these two schemes achieve same region under certain conditions.\par

\begin{figure}[t]
\centering
  \includegraphics[scale=0.6]{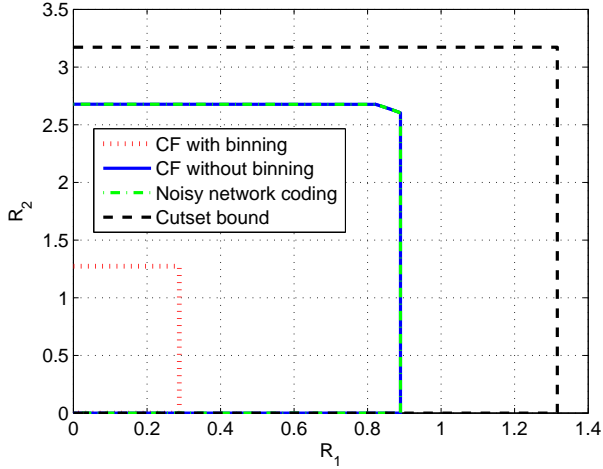}\\
  \caption{Rate regions for
    $P=20,g_{r1}=0.5,g_{1r}=2,g_{r2}=2,g_{2r}=0.5,g_{12}=g_{21}=0.1$}\label{s5}
\end{figure}

\begin{figure}[t]
\centering
  \includegraphics[scale=0.6]{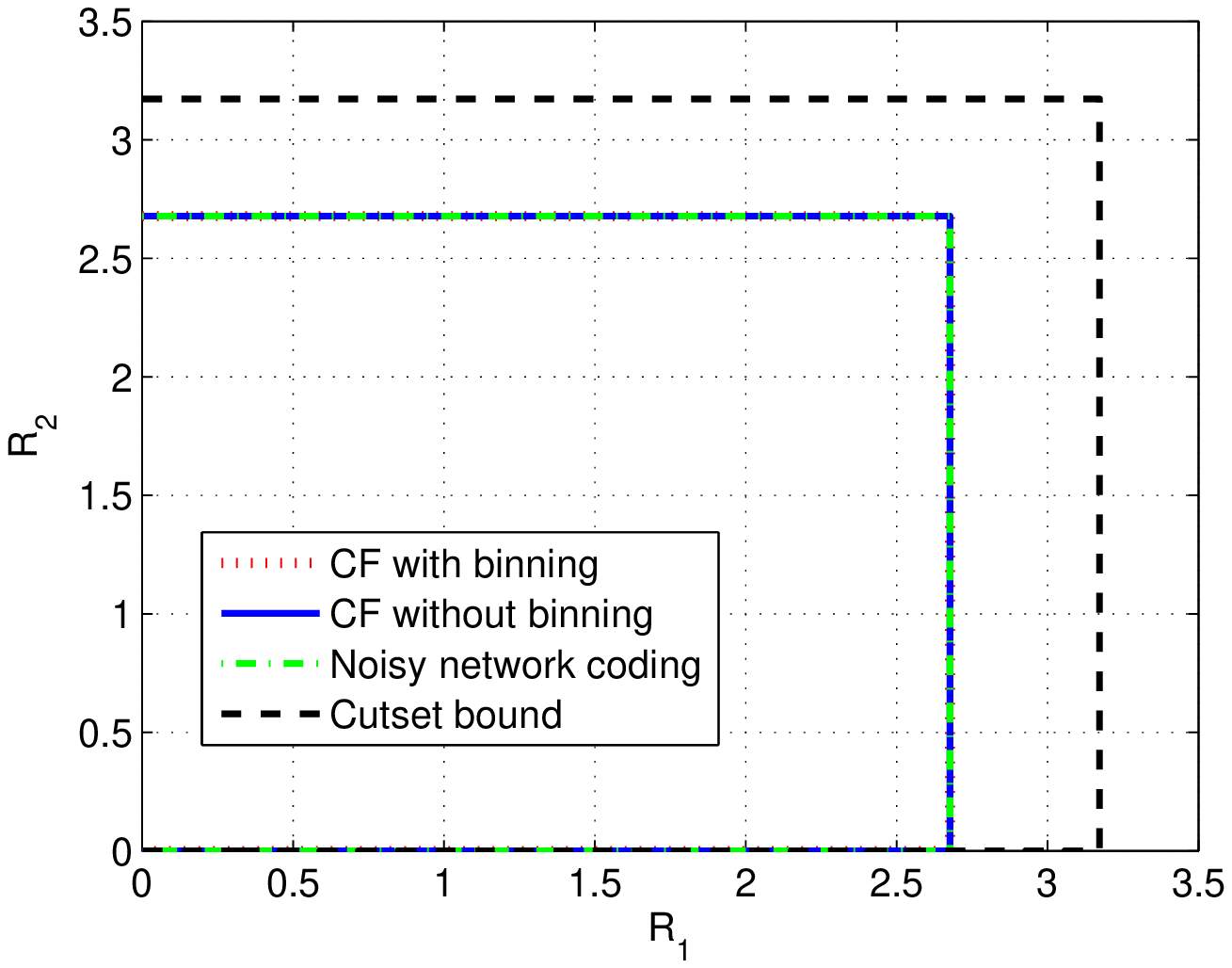}\\
  \caption{Rate regions for
    $P=20,g_{r1}=g_{1r}=g_{r2}=g_{2r}=2,g_{12}=g_{21}=0.1$}\label{s3}
\end{figure}

\begin{cor}
\em{[Lim, Kim, El Gamal, and Chung].}
\emph{The following rate region is achievable for the Gaussian two-way relay channel with noisy network coding scheme:
\begin{align}
\label{Gaussian_noisy}
R_1&\leq\min\{R_{11}(\sigma^2),R_{12}(\sigma^2)\}\nonumber\\
R_2&\leq\min\{R_{21}(\sigma^2),R_{22}(\sigma^2)\}
\end{align}
for some $\sigma^2>0$, where $R_{11}(\sigma^2), R_{12}(\sigma^2), R_{21}(\sigma^2), R_{22}(\sigma^2)$ are defined in (\ref{four_rate}).}
\end{cor}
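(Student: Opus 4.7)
The plan is to specialize Theorem \ref{thm_noisy} to the Gaussian channel (\ref{GTWRC}) by evaluating the four mutual information quantities under the standard jointly Gaussian choice. First I would take $X_1, X_2, X_r$ mutually independent with $X_1, X_2, X_r \sim \mathcal{N}(0,P)$, and set $\hat{Y}_r = Y_r + \hat{Z}$ with $\hat{Z} \sim \mathcal{N}(0,\sigma^2)$ independent of $(X_1, X_2, X_r, Z_1, Z_2, Z_r)$. Since Theorem \ref{thm_noisy} imposes no constraint on the compression rate, $\sigma^2$ is free to range over all of $(0,\infty)$, which already matches the ``for some $\sigma^2>0$'' in the statement.

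The substantive work is the mutual information computation, and I would split it into two kinds of term. For the ``coherent combining'' terms $I(X_1;Y_2,\hat{Y}_r|X_2,X_r)$ and $I(X_2;Y_1,\hat{Y}_r|X_1,X_r)$, I would condition out $(X_2,X_r)$ (resp.\ $(X_1,X_r)$), which leaves each observation as a scalar Gaussian linear in $X_1$ (resp.\ $X_2$) with independent noise. These collapse to two-branch Gaussian diversity combining, contributing SNR $g_{21}^2 P$ from $Y_2$ and $g_{r1}^2 P/(1+\sigma^2)$ from $\hat{Y}_r$, giving $R_{11}(\sigma^2)$; by symmetry the other term gives $R_{21}(\sigma^2)$.

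For the ``MAC minus distortion'' terms I would handle each piece separately. In $I(X_1,X_r;Y_2|X_2)$, conditioning on $X_2$ leaves a scalar Gaussian multiple-access channel with combined received signal variance $g_{21}^2 P + g_{2r}^2 P$, yielding $C(g_{21}^2 P + g_{2r}^2 P)$. For $I(\hat{Y}_r;Y_r|X_1,X_2,X_r,Y_2)$ I would argue that given $(X_1,X_2)$ the variable $Y_r = g_{r1}X_1 + g_{r2}X_2 + Z_r$ reduces to the noise $Z_r$, which is independent of both $X_r$ and of $Y_2 - g_{21}X_1 - g_{2r}X_r = Z_2$; hence this term collapses to $h(Z_r+\hat{Z})-h(\hat{Z}) = C(1/\sigma^2)$. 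Combining the two pieces yields $R_{12}(\sigma^2)$, and the parallel computation for user 2 gives $R_{22}(\sigma^2)$.

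There is no serious obstacle in this argument, but the one place that deserves care—and which I would flag as the main subtlety—is the conditional-independence reduction of the distortion term $I(\hat{Y}_r;Y_r|X_1,X_2,X_r,Y_2)$: one must verify under the Gaussian joint law that conditioning further on $(X_r,Y_2)$ provides no additional information about $Y_r$ once $(X_1,X_2)$ are fixed. This is exactly what makes the distortion penalty the familiar $C(1/\sigma^2)$ and, crucially, decouples it from any channel-dependent constraint on $\sigma^2$; this is also precisely the feature that distinguishes the present corollary from Corollary \ref{cor:G_nobin}, in which the extra compression-rate constraint (\ref{const_1}) forces the restriction $\sigma^2 \geq \max\{\sigma_{c1}^2, \sigma_{c2}^2\}$.
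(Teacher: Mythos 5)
Your derivation is correct and complete. The paper itself gives no proof of this corollary---it is stated as a cited result of Lim, Kim, El Gamal, and Chung---so your direct evaluation of the mutual informations in Theorem~\ref{thm_noisy} under the jointly Gaussian choice $X_1,X_2,X_r$ i.i.d.\ $\mathcal{N}(0,P)$ and $\hat{Y}_r=Y_r+\hat{Z}$, $\hat{Z}\sim\mathcal{N}(0,\sigma^2)$, supplies exactly the omitted computation; in particular the conditional-independence reduction you flag for $I(\hat{Y}_r;Y_r|X_1,X_2,X_r,Y_2)$ is handled correctly (given $(X_1,X_2)$, both $Y_r$ and $\hat{Y}_r$ reduce to functions of $(Z_r,\hat{Z})$, which is independent of $(X_r,Z_2)$, so conditioning further on $(X_r,Y_2)$ changes nothing and the term collapses to $C(1/\sigma^2)$), and the unconstrained $\sigma^2>0$ follows simply from the absence of any compression-rate constraint in Theorem~\ref{thm_noisy}.
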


\begin{thm}
Compress-forward without binning achieves the same rate region as noisy network coding for the Gaussian two-way relay channel if
\begin{align}
\label{condition_2}
\sigma_{c1}^2&\leq\sigma_{e2}^2\nonumber\\
\sigma_{c2}^2&\leq\sigma_{e1}^2,
\end{align}
where
\begin{align}
\sigma_{e1}^2&=(1+g_{21}^2P+g_{r1}^2P)/(g_{2r}^2P)\nonumber\\
\sigma_{e2}^2&=(1+g_{12}^2P+g_{r2}^2P)/(g_{1r}^2P),\nonumber
\end{align}
and $\sigma_{c1}^2,\sigma_{c2}^2$ are defined in (\ref{c1c2}). Otherwise it achieves a smaller rate region.
\end{thm}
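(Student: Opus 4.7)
The plan is to reduce the equality of the two Gaussian rate regions to an equality of admissible compression-noise ranges. Both the NNC corollary after Theorem~\ref{thm_noisy} and Corollary~\ref{cor:G_nobin} describe unions of identical rectangles $\{R_1\le f_1(\sigma^{2}),\ R_2\le f_2(\sigma^{2})\}$ with $f_1:=\min\{R_{11},R_{12}\}$ and $f_2:=\min\{R_{21},R_{22}\}$; they differ only in that NNC allows any $\sigma^{2}>0$, whereas CF without binning requires $\sigma^{2}\ge\sigma_c^{\star 2}:=\max\{\sigma_{c1}^{2},\sigma_{c2}^{2}\}$. So my task reduces to pinning down the Pareto-optimal range of $\sigma^{2}$ for NNC and checking whether it lies entirely inside $[\sigma_c^{\star 2},\infty)$.

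I would first record the monotonicities: as functions of $\sigma^{2}$, $R_{11}$ and $R_{21}$ are strictly decreasing while $R_{12}$ and $R_{22}$ are strictly increasing, and a direct algebraic computation (solving $R_{11}(\sigma^{2})=R_{12}(\sigma^{2})$ and $R_{21}(\sigma^{2})=R_{22}(\sigma^{2})$) shows that the unique crossing points are $\sigma^{2}=\sigma_{e1}^{2}$ and $\sigma^{2}=\sigma_{e2}^{2}$ respectively. Hence $f_1$ is unimodal with peak at $\sigma_{e1}^{2}$ and $f_2$ unimodal with peak at $\sigma_{e2}^{2}$. Outside the closed interval $I_e:=[\min\{\sigma_{e1}^{2},\sigma_{e2}^{2}\},\max\{\sigma_{e1}^{2},\sigma_{e2}^{2}\}]$ both $f_1$ and $f_2$ are monotone in the same direction, so those rectangles are dominated; the NNC boundary is therefore swept out exactly by $\sigma^{2}\in I_e$.

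The two regions then coincide if and only if $I_e\subseteq[\sigma_c^{\star 2},\infty)$, equivalently $\sigma_c^{\star 2}\le\min\{\sigma_{e1}^{2},\sigma_{e2}^{2}\}$. I would reduce this single inequality to the pair of hypotheses~(\ref{condition_2}) by using the automatic relations $\sigma_{c1}^{2}<\sigma_{e1}^{2}$ and $\sigma_{c2}^{2}<\sigma_{e2}^{2}$, which follow directly from the definitions since each $\sigma_{ei}^{2}$ equals $\sigma_{ci}^{2}$ plus a strictly positive cross-channel term. Thus the requirement $\sigma_{ci}^{2}\le\min\{\sigma_{e1}^{2},\sigma_{e2}^{2}\}$ collapses to the cross-inequality $\sigma_{ci}^{2}\le\sigma_{ej}^{2}$ for the other user~$j$, which is precisely~(\ref{condition_2}).

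For the ``otherwise'' direction, suppose for instance $\sigma_{c1}^{2}>\sigma_{e2}^{2}$. Then $\sigma_{e2}^{2}<\sigma_{c1}^{2}<\sigma_{e1}^{2}$, so $\min\{\sigma_{e1}^{2},\sigma_{e2}^{2}\}=\sigma_{e2}^{2}<\sigma_c^{\star 2}$, and strict monotonicity of $f_2$ on $[\sigma_c^{\star 2},\infty)$ shows that the maximum-$R_2$ corner of NNC, uniquely attained at $\sigma^{2}=\sigma_{e2}^{2}$, is strictly outside the CF-without-binning region; the case $\sigma_{c2}^{2}>\sigma_{e1}^{2}$ is symmetric. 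The main obstacle I anticipate is keeping the Pareto-frontier argument for the union of rectangles rigorous across both orderings of $\sigma_{e1}^{2}$ and $\sigma_{e2}^{2}$; once unimodality and the automatic inequalities $\sigma_{ci}^{2}<\sigma_{ei}^{2}$ are in hand, the remaining bookkeeping is routine.
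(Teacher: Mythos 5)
Your argument is correct and follows essentially the same route as the paper: identify the monotonicity of $R_{11},R_{21}$ (decreasing) and $R_{12},R_{22}$ (increasing), locate the crossings at $\sigma_{e1}^2$ and $\sigma_{e2}^2$, conclude that the constraint $\sigma^2\ge\max\{\sigma_{c1}^2,\sigma_{c2}^2\}$ is non-restrictive exactly when $\max\{\sigma_{c1}^2,\sigma_{c2}^2\}\le\min\{\sigma_{e1}^2,\sigma_{e2}^2\}$, and then simplify via the automatic inequalities $\sigma_{ci}^2<\sigma_{ei}^2$. You spell out the Pareto-frontier justification (that $\sigma^2\in[\min\sigma_{ei}^2,\max\sigma_{ei}^2]$ sweeps the NNC boundary) and the ``otherwise'' direction more explicitly than the paper does, but these are the same ideas the paper implicitly relies on.
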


\begin{proof}
Note that both $R_{11}(\sigma^2),R_{21}(\sigma^2)$ are nonincreasing and $R_{12}(\sigma^2),R_{22}(\sigma^2)$ are nondecreasing. Also,
\begin{align}
R_{11}(\sigma_{e1}^2)&=R_{12}(\sigma_{e1}^2)\nonumber\\
R_{21}(\sigma_{e2}^2)&=R_{22}(\sigma_{e2}^2).\nonumber
\end{align}
Therefore, the constraint in Corollary \ref{cor:G_nobin} is redundant if
\begin{align}
\label{condition_1}
\max \{\sigma_{c1}^2,\sigma_{c2}^2\}\leq \min \{\sigma_{e1}^2,\sigma_{e2}^2\}.
\end{align}
Since $\sigma_{c1}^2\leq\sigma_{e1}^2$ and $\sigma_{c2}^2\leq\sigma_{e2}^2$ always hold, the above condition (\ref{condition_1}) is equivalent to (\ref{condition_2}).
\end{proof}



\subsection{Numerical Results}
We now compare numerically compress-forward without binning with the original compress-forward scheme \cite{rankov2006achievable} and noisy network coding \cite{sung2011noisy}. Figure \ref{s1} shows an asymmetric channel configuration in which compress-forward without binning achieves a strictly larger rate region than the original compress-forward scheme, but slightly smaller than noisy network coding. Figure \ref{s2} plots the sum rates for the same channel configurations as in Figure \ref{s1}, which shows that compress-forward without binning achieves the same sum rate as noisy network coding despite a smaller rate region. Figure \ref{s5} shows a case where compress-forward without binning achieves the same rate region as noisy network coding and larger than compress-forward with binning. Figure \ref{s3} shows a symmetric channel configuration in which three schemes achieve the same rate region. Figure \ref{s4} shows the sum rate for another configuration when the two users and the relay are on a straight line. In this case, compress-forward without binning and noisy network coding achieves the same sum rate, which is higher than that of compress-forward with binning.\par
As shown in these figures, compress-forward without binning achieves larger rate region and sum rate than the original compress-forward scheme in \cite{rankov2006achievable} when the channel is asymmetric for the two users. Compress-forward without binning achieves the same rate region as noisy network coding incases when (\ref{condition_2}) is satisfied. Furthermore, it has less decoding delay which is only 1 instead of $b$ blocks.

\begin{figure}[t]
\centering
  \includegraphics[scale=0.6]{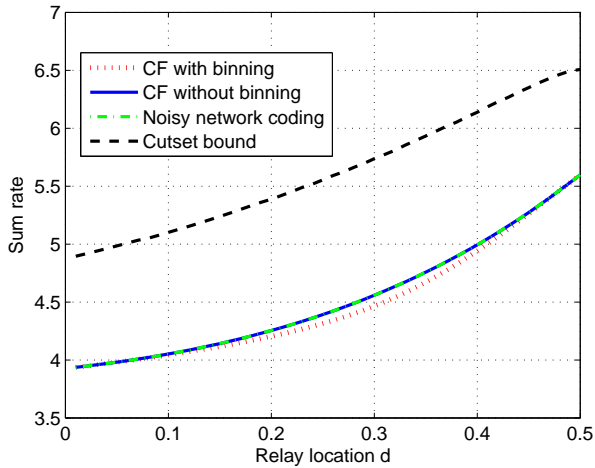}\\
  \caption{Sum rate for
    $P=10, g_{r1}=g_{1r}=d^{-\gamma/2}, g_{r2}=g_{2r}=(1-d)^{-\gamma/2},g_{12}=g_{21}=1, \gamma=3$}\label{s4}
\end{figure}

\section{Conclusion}\label{Con}
We have analyzed compress-forward without Wyner-Ziv binning but with joint
decoding of both the message and compression index in the one-way and two-way relay channels.
In both channels, compress-forward without binning either achieves the same rate or improves the rate region compared to the original compress-forward with binning by not limiting the compression rate to the weakest link. It does however increase decoding complexity by requiring joint decoding instead of sequential decoding. Compress-forward without binning also achieves the same rate regions as noisy network coding for the Gaussian TWRC in certain cases, for which it may be more preferable because of less decoding delay. These results help understand the role of Wyner-Ziv binning in compress-forward
schemes.

\section*{Acknowledgment}
The authors would like to thank Ahmad Abu Al Haija for checking a part of the proof.

\bibliographystyle{IEEEtran}
\bibliography{reflist}

\end{document}